\newcommand{\p}{\partial}
\newcommand{\dd}{{\rm d}}
\newcommand{\bd}{\begin{definition}}                
\newcommand{\ed}{\end{definition}}                  
\newcommand{\bc}{\begin{corollary}}                 
\newcommand{\ec}{\end{corollary}}                   
\newcommand{\bl}{\begin{lemma}}                     
\newcommand{\el}{\end{lemma}}                       
\newcommand{\bp}{\begin{proposition}}            
\newcommand{\ep}{\end{proposition}}                
\newcommand{\bere}{\begin{remark}}                  
\newcommand{\ere}{\end{remark}}                     
\newcommand{\bt}{\begin{theorem}}
\newcommand{\et}{\end{theorem}}
\newcommand{\be}{\begin{equation}}
\newcommand{\ee}{\end{equation}}
\newcommand{\bit}{\begin{itemize}}
\newcommand{\eit}{\end{itemize}}
\newtheorem{theorem}{Theorem}[section]
\newtheorem{corollary}[theorem]{Corollary}
\newtheorem{lemma}[theorem]{Lemma}
\newtheorem{proposition}[theorem]{Proposition}
\theoremstyle{definition}
\newtheorem{definition}[theorem]{Definition}
\theoremstyle{remark}
\newtheorem{remark}[theorem]{Remark}
\begin{document}


\title{The connections of pseudo-Finsler spaces}


\author{E. Minguzzi\thanks{
Dipartimento di Matematica e Informatica ``U. Dini'', Universit\`a
degli Studi di Firenze, Via S. Marta 3,  I-50139 Firenze, Italy.
Corresponding author e-mail: ettore.minguzzi@unifi.it }}

\date{}

\maketitle

\begin{abstract}
\noindent We give an  introduction to (pseudo-)Finsler geometry and its connections. For most results we  provide short and self contained proofs. Our study of the Berwald non-linear connection is framed  into the theory of  connections over general fibered spaces pioneered by Mangiarotti, Modugno and other scholars. The main identities for the linear Finsler connection are presented in the general case, and then specialized to some notable cases like Berwald's, Cartan's or Chern-Rund's. In this way it becomes easy to compare them and see the advantages of one connection over the other. Since we introduce two soldering forms we are able to characterize the notable Finsler connections in terms of their torsion properties.  As an application, the curvature symmetries implied by the compatibility with a metric suggest that in Finslerian generalizations of general relativity the mean Cartan torsion  vanishes. This observation allows us to obtain dynamical equations which imply a satisfactory conservation law.
The work ends with a discussion of yet another Finsler  connection which has some advantages over Cartan's and Chern-Rund's.
\end{abstract}

\begin{flushright}
{\em Dedicated to Marco Modugno}
\end{flushright}

\tableofcontents

\section{Introduction}
We introduce Finsler geometry through the theory of connections on generalized fibered manifolds. The theory of connections was first developed in the limited framework of principal bundles. Each connection on a vector bundle was then seen as  induced from another connection on a principal bundle thanks to a representation of the group into a vector space. This somewhat involved approach, of which the reader can find a good account in Kobayashi and Nomizu's classic book \cite{kobayashi63}, is still the most popular among physicists and other practitioners of gauge theories.

Fortunately, with the discovery of the Fr\"olicher-Nijenhuis bracket and with the understanding of its fundamental role in the theory of connections, it became clear that connections could be introduced directly over every fibered space as splittings of the tangent bundle, with only reference to the structure of the fiber and without necessarily mentioning any related principal bundle. The theory worked very well especially for what concerned the definition of curvature and the Bianchi identities. Finally, the work by Marco Modugno \cite{modugno91} clarified how to introduce the torsion passing through  a soldering form, thus completing the theory. The reader wishing to explore the theory of connections in more detail than done in this work is  referred to \cite{modugno91,michor91,kolar93}.

A special laboratory in which one can test these general techniques is Finsler geometry. This geometry is also somewhat involved because it presents several types of connections, whose appropriate presentation becomes crucial for a correct understanding of the theory. Cartan had already understood how to frame this theory in a satisfactorily way using the concept of vertical bundle; still the role of the non-linear connection was not fully  recognized, and hence it is important to compare the results so far available in Finsler geometry with those results on non-linear connections which can be deduced from the general theory just mentioned.

This work does not present an historical account of Finsler geometry, as it rather reflects the author's personal views on this theory. We apologize if some sources  have  not been fully acknowledged. Similarly, the original content of this work might have not been sufficiently stressed.

Good introductions are provided by the textbooks \cite{antonelli93,bao00,shen01,mo06,szilasi14}. Investigations on Finsler geometry possibly related to this one are \cite{grifone72,abate96,szilasi98}. A nice short introduction to Finsler geometry   much  in the spirit of this work  is \cite{dahl06}.

\section{Geometrical preliminaries}
This section is devoted to the introduction of some known concepts. It main purpose is that of establishing  and clarifying the notation.

\subsection{Vector bundle and local coordinates}

Let $\pi_B\colon E\to B$ be a vector bundle. Let $\{x^i\}$ be coordinates on the base and let $E_x\ni y = y^a e_a$ where $\{e_a\}$ is a base of $E_x$, so that $(x^i,y^a)$ are local coordinates on $E$. Every element  $v\in TE$ reads
\begin{equation} \label{vhg}
v=\dot{x}^i \frac{\p }{\p x^i}+\dot{y}^a \frac{\p }{\p y^a}
\end{equation}
so that
\[
(x^i,y^a, \dot{x}^i, \dot{y}^a)
\]
are local coordinates of $TE$. Observe that $\frac{\p }{\p y^a}=e_a$.

If we make a change of coordinates and a change of base
\begin{align}
\tilde{x}&=\tilde{x}(x) ,  \label{bh1}\\
\tilde{e}_a(x)&=(M^{-1})^b_{\,\, a}(x) \,e_b, \label{alf}
\end{align}
 then
\begin{equation} \label{lol}
\tilde y^a=M^a_{\, b}(x)\, y^b,
\end{equation}
and
\begin{align*}
\frac{\p}{\p x{}^i}&=\frac{\p \tilde x^j}{\p x{}^i} \frac{\p}{\p \tilde x^j}+ \frac{\p \tilde y^b}{\p x{}^i} \frac{\p}{\p \tilde y^b}= \frac{\p \tilde x^j}{\p x{}^i} \frac{\p}{\p \tilde x^j}+ M^b_{\,\,c,i}(x) \,y^c \frac{\p}{\p \tilde y^b} \\
\frac{\p}{\p y{}^a}&=\frac{\p \tilde x^j}{\p y{}^a} \frac{\p}{\p \tilde x^j}+ \frac{\p \tilde y^b}{\p y{}^a} \frac{\p}{\p \tilde y^b}=M^b_{\,\,a}(x)\, \frac{\p}{\p \tilde y^b}
\end{align*}
where the last equation coincides with  (\ref{alf}). (observe that $\p/\p x^i\colon E\to TE$ thus it is a vector field over $E$, not over $B$).
Replacing in (\ref{vhg})
\begin{align}
\dot{\tilde x}^i&=\frac{\p \tilde x^i}{\p x^j }\,\dot{x}^j, \label{pu1}\\
 \dot{\tilde y}^a&=M^a_{\,b}(x) \, \dot{y}^b+M^a_{\,b,k}(x) \,\dot{x}^k y^b. \label{pu2}
\end{align}
Equations (\ref{bh1}), (\ref{lol}), (\ref{pu1}) and (\ref{pu2}) give the transformation rules for the vector bundle.

\subsection{Vertical space and Liouville vector field}
Let $\pi_B\colon E\to B$ be a vector bundle.
The projection $\pi_E\colon TE\to E$ reads in local coordinates
\[
(x^i,y^a, \dot{x}^i, \dot{y}^a) \mapsto (x^i,y^a) ,
\]
while the map $T\pi_B\colon TE\to TB$ reads
\[
(x^i,y^a, \dot{x}^i, \dot{y}^a) \mapsto (x^i,\dot{x}^i) .
\]
The {\em vertical subbundle} is $VE:= \textrm{ker} \,T\pi_B$, thus it is made of the points
\[
(x^i,y^a, 0, \dot{y}^a) .
\]
Observe that the cocycle for $\dot{y}^a$ becomes
\[
\dot{y}'{}^a=M^a_{\,b}(x) \,\dot{y}^b
\]
which is the same transformation rule for $y^c$. Indeed, we can identify
\[
VE=E\times_B E
\]
through the map $s\colon VE\to E\times_B E$,  $(x^i,y^a, 0, \dot{y}^a) \to (x^i,y^a, \dot{y}^a)$.

The {\em Liouville vector field}
\[
L\colon E\to VE=E\times_B E,
\]
is given by
\[
(x^i,y^a) \mapsto (x^i,y^a, 0, y^a).
\]
Stated in another way, every relation on $E$ which sends $p\in E$ to points on the same fiber, is a subset of $E\times_B E$. The Liouville vector field is the diagonal relation, namely the identity relation, $(x,y) \to (x,y,y)$. Since  $VE$ is included in $TE$, we can regard $L$ as a vector field on $E$. Its expression is
\[
L(x,y)= y^a \frac{\p }{\p y^a}.
\]
This is also the infinitesimal generator of the Lie group action
\[
\mathbb{R}\times E \to E, \quad (t, (x^i,y^a))\mapsto (x^i, e^t y^a) .
\]
Finally, a {\em  soldering form} on a general fibered space is a map (not necessarily a linear isomorphism)
\[
e\colon E\to T^*B\otimes_E VE.
\]
Some fiber bundles admit a natural soldering form. We shall see an example in the next section.

\subsection{The tangent bundle}

If $B=M$, $E=TM$,  and $e_a=e^i_a(x)  \frac{\p}{\p x^i}$ is the chosen base for $T_xM$,  we can introduce the soldering form
\begin{equation} \label{mdp}
e= e^a_i(x) \frac{\p}{\p y^a}\otimes \dd x^i
\end{equation}
where $e^a_i$ is the inverse of $e^i_a$, $\frac{\p}{\p x^i}=e^a_i \frac{\p}{\p y^a}$. The components of this soldering form are called {\em vierbeins}. The transformation rule for the vierbeins is
\[
\tilde e^j_b M^b_{\, a}=\frac{\p \tilde x^j}{\p x^i } \,e^i_a .
\]
It is convenient to make the base $\{e_a\}$ coincide with $\{\p/\p x^i\}$ in any coordinate system, in this way the vierbeins are identity matrices
\[
e^i_b=\delta^i_b, \quad \Rightarrow \ M^j_{\, i}=\frac{\p \tilde x^j}{\p x^i } ,
\]
and under this stipulation we do not need to distinguish the roles of  roman and Latin indices.
The two coordinate vectors in the middle of $(x^i,y^a, \dot{x}^i, \dot{y}^a)$ transform in the same way (they have the same cocycle). Indeed, there is an involution ($j^2=Id$), called {\em canonical flip}, which in coordinates reads
\[
j\colon TTM\to TTM, \quad (x^i,y^i, \dot{x}^i, \dot{y}^i)\mapsto (x^i, \dot{x}^i, y^i, \dot{y}^i).
\]
The symmetrized bundle $STM$ is the subbundle left invariant by $j$, namely that made by those points for which $y^i=\dot{x}^i$.

\begin{remark}
The soldering form  (\ref{mdp}) is closely related with the  {\em canonical endomorphism}
\[
J\colon TTM\to TTM, \quad J=  \frac{\p}{\p y^k} \otimes \dd x^k,
\]
which in coordinates it is given by
\[
 (x^i, y^i, \dot{x}^i,  \dot{y}^i) \to (x^i,  y^i, 0, \dot{x}^i).
\]
It satisfies
\[
\operatorname{Ran}(J)=\operatorname{Ker}(J)=VTM, \qquad \mathcal L_LJ= -J, \qquad J[X,Y]=J[JX,Y]+J[X,JY],
\]
where $\mathcal{L}$ is the Lie derivative. These properties can be used to characterize the tangent bundle among the vector bundles with base $M$ which have twice the dimension of $M$. In the sequel $J$ will not be used although some constructions can pass through it \cite{grifone72}.
\end{remark}

\subsection{Second order equations}
Let us specialize to the case $E=TM$.
A second order equation is a map (for more on $TTM$ see Godbillon \cite{godbillon69})
\[
G\colon TM \to TTM,
\]
with the property that $\pi_{TM}\circ G=Id$, namely it is a section, and $T\pi_M \circ G= Id$, which implies that the map has the form
\[
(x^i,y^i)\mapsto (x^i,y^i, y^i, -2G^i(x,y))
\]
for some function $G^i$ (it is also usual to denote $H^i=-2G^i$). In other words it is a section of the symmetrized bundle. Equivalently,
\[
G(x,y)=y^i\frac{\p }{\p x^i}-2G ^i(x,y)\frac{\p }{\p y^i}.
\]
Thus any integral curve is obtained lifting a solution of
\[
\ddot{x}^i+2 G^i(x,\dot{x})=0.
\]
The cocycle for $-2G$ is the same as that for $\dot{y}$, thus
\begin{align}
 \tilde G^i(\tilde x, \tilde y)&=M^i_{\,j}(x)\, G^j-\frac{1}{2}M^i_{\,j,k}(x) y^k y^j=\frac{\p \tilde x^i}{\p x^j }\,  G^j-\frac{1}{2}\frac{\p^2 \tilde x^i}{\p x^j \p x^k } y^k y^j \nonumber \label{njv}\\
 &= \frac{\p \tilde x^i}{\p x^j } \, G^j(x,y)-\frac{1}{2}\frac{\p^2 \tilde x^i}{\p x^j \p x^k } \frac{\p x^k}{\p \tilde x^l }  \frac{\p x^j}{\p \tilde x^m } \, \tilde y^l \tilde y^m .
\end{align}
Observe that it makes sense to say that this function is quadratic  or positive homogeneous of second degree in $y$, since these conditions are preserved by the cocycle.
Differentiating three times
\[
\frac{\p^3 \tilde G^i}{\p \tilde y^j\p \tilde y^k\p \tilde y^l} =\frac{\p \tilde x^i}{\p x^m } \frac{\p x^p}{\p \tilde x^j} \frac{\p x^q}{\p \tilde x^k} \frac{\p x^r}{\p \tilde x^l} \frac{\p^3 G^m}{\p y^p\p y^q\p y^r}.
\]
The symmetric vector-valued tensor defined for $y\ne 0$, $G\colon (E\backslash 0)\times (TM)^3\to VE$
\[
G^i_{jkl}(x,y) \frac{\p}{\p y^i} \otimes \dd x^j \otimes \dd x^k \otimes \dd x^l, \qquad   G^i_{jkl}(x,y)=\frac{\p^3 G^i}{\p y^j\p y^k\p y^l}
\]
is called {\em Berwald curvature}, and vanishes if and only if the spray is quadratic in $y$.

\begin{remark}
It can be shown assuming positive homogeneity that if $G^i$ were twice continuously differentiable with respect to $y$ at the origin then it would be quadratic (Berwald space). Later on we shall introduce a map $\mathscr{L}\colon TM\to \mathbb{R}$ positive homogeneous of degree two which, again, will be twice  differentiable with respect to $y$ at the origin only in the quadratic case.  Its Hessian $g$ will not be continuously extendable to the zero section in the non-quadratic case. In order to work with more general cases, in several construction we shall remove the zero section, working with the slit tangent bundle $E=TM\backslash 0$ in place of $TM$.
\end{remark}

\subsubsection{Sprays}
A spray $G$ is a second order equation such that
\[
[L,G]=G ,
\]
where, since $VE$ is included in $TE$, we can regard $L$ as a vector field on $E$.
This condition is equivalent to
\[
y^i \frac{\p }{\p y^i}\, G^k(x,y)=2 G^k(x,y) ,
\]
which is the condition of positive homogeneity of degree two: $G(x,sy)=s^2 G(x,y)$ for every $s>0$.
Observe that $G^i$ is positive homogeneous of degree 2 if and only if ${\p^2 G^i}/{\p y^j\p y^k}$ is positive homogeneous of degree zero, which is the case iff
\[
G^{i}_{jkl} y^l=0.
\]

\section{The non-linear connection}

In this section we introduce the non-linear connection of Finsler geometry. Actually, in some special cases the `non-linear connection' can be linear.

\subsection{Connections on vector bundles}
Given a vector bundle $\pi_B\colon E\to B$, we have the exact sequence
\[
0 \to VE \to TE \to E\times_B TB \to 0,
\]
where all the  bundles appearing in the sequence have base $B$. The bundle $E\times_B TB$ is called {\em pullback bundle}\footnote{For the definition of pullback bundle see \cite[Chap.\ 2, Teo.\ 1.7 and p.\ 150,154]{godbillon69}.} and often denoted $\pi^* TB$.
A {\em connection} is a splitting of this exact sequence, namely either a linear inclusion map (linear in the second factor)
\[
\mathcal{N} \colon E\times_B TB \to TE,
\]
called {\em horizontal lift} or a linear {\em vertical projection} (linear on the tangential component, not on the base point on $E$)
\[
\nu\colon TE\to VE,
\]
$\nu\circ \mathcal{N}=0$, the former implying the latter and conversely.

The tangent bundle $E$ gets splitted
\[
TE=HE\oplus VE,
\]
where $HE=\ker \nu=\textrm{Im} \mathcal{N}$.
In coordinates
\begin{equation} \label{ksg}
\mathcal{N}\colon (x^i,y^a, \dot{x}^i)\mapsto (x^i,y^a, \dot{x}^i, -N^a_k(x,y) \dot{x}^k) ,
\end{equation}
that is the vector of $T_xB$
\[
\dot{x}^i\frac{\p}{\p x^i}
\]
is lifted to the vector of $T_{e} E$
\[
\dot{x}^k [\frac{\p}{\p x^k} -N^a_k(x,y) \frac{\p}{\p y^a}],
\]
where the coordinates of $e\in E$ are $(x^i,y^a)$, thus
\begin{equation} \label{vof}
\mathcal{N}= (\frac{\p}{\p x^k} -N^a_k(x,y) \frac{\p}{\p y^a} )\otimes \dd x^k\vert_{TB} .
\end{equation}
From the transformation rules of the first section we find the cocycle for $N^a_j$
\begin{equation} \label{gam}
\tilde N^a_j=(M^a_{\, b} N^b_k- M^a_{\,b,k} y^b)\,\frac{\p x^k}{\p \tilde x^j}.
\end{equation}
It makes sense to consider connections linear in $y$ since the cocycle for $N^a_k$ preserves this condition. It also makes sense to consider connections which are positive homogeneous of degree one in $y$,
\[
N^a_k(x,sy)= s N^a_k(x,y), \quad s>0
\]
since again, this condition is preserved by the cocycle. In this case
\[
N^a_{bk}:=\frac{\p}{\p y^b} N^a_k(x,y)
\]
 is positive homogeneous of degree zero in $y$ thus
\begin{equation} \label{pfc}
y^c\frac{\p}{\p y^c} N^a_{bk}(x,y)=y^c\frac{\p}{\p y^b}  N^a_{ck}(x,y)=0,
\end{equation}
an identity which we shall use later on. It can be useful to observe that a base for $H_eE$, $e\in E$ is given by (with abuse of notation)
\[
\{ \frac{\delta }{\delta x^k}\}, \quad \frac{\delta }{\delta x^k}:=\frac{\p}{\p x^k} -N^a_k(x,y) \frac{\p}{\p y^a}
\]
while a base for $V_eE$ is given by $\{\p/\p y^a\}$. Observe that if we make a change of coordinates $\tilde{x}= \tilde{x}(x)$, then on $TB$, $\frac{\p}{\p \tilde{x}^i}= \frac{\p x^k}{\p \tilde x^i} \frac{\p}{\p {x}^k}$ and since the horizontal lift is  linear
\[
\frac{\delta}{\delta \tilde{x}^i}= \frac{\p x^k}{\p \tilde x^i} \frac{\delta}{\delta {x}^k}.
\]
Let us come to the definition of connection through a projection. The projection in coordinates is
\begin{equation} \label{ome}
\nu\colon (x^i,y^a, \dot{x}^i, \dot{y}^a) \mapsto (x^i,y^a, 0, \dot{y}^a+N^a_k(x,y) \dot{x}^k)
\end{equation}
that is
\begin{equation} \label{con}
\nu=\frac{\p}{\p y^a}\otimes [\dd y^a+N^a_k(x,y) \dd x^k]
\end{equation}
Coming to the dual spaces, $H_eE^*$ is defined as the subspace of $T_e^*E$ whose elements return zero when applied to $V_eE$, and analogously, $V_eE$ is defined  as the subspace of $T_e^*E$ whose elements return zero when applied to $H_eE$. The base of the former space is $\{ \dd x^a\}$ while that for the latter space is
\[
\{\delta y^a\}, \quad \delta y^a:=\dd y^a+N^a_k(x,y) \dd x^k.
\]
Using Req.\ (\ref{lol}) and (\ref{gam}) we find that under a change of coordinates and under a change of base of $E$, these one-forms transform as follows
\[
\delta \tilde y^a=M^a_b(x) \delta y^b.
\]
Clearly, $\delta y^a(\p/\p y^b)=\delta^a_b$, $\dd x^i (\delta/\delta x^j)=\delta^i_j$ and the other contractions vanish. The base for $T_eE$ has commutation relations
\begin{align}
\left[ \frac{\delta}{\delta x^i}, \frac{\delta}{\delta x^j}\right]&= -R^a_{ij} \frac{\p}{\p y^a}, \label{nns}\\
\left[ \frac{\delta}{\delta x^i}, \frac{\p}{\p y^a}\right]&= N^b_{ai}\,\frac{\p}{\p y^b} , \label{nnd}\\
\left[ \frac{\p}{\p y^a}, \frac{\p}{\p y^b}\right]&=0. \label{nnf}
\end{align}
where the vertical map
\[
R\colon E\to VE \otimes_E T^*M\otimes_E T^*M, \quad R=R^a_{ij}(x,y) \frac{\p}{\p y^a} \otimes \dd x^i\otimes \dd x^j,
\]
\begin{equation} \label{cur}
 R^a_{ij}=\frac{\delta N^a_j}{\delta x^i}-\frac{\delta N^a_i}{\delta x^j},
\end{equation}
is called {\em curvature}. We shall see later on that the definition is well posed. For the moment, we just observe that $R=0$ iff the horizontal distribution is integrable, namely iff there exist locally flat sections.

Let $B=M$, $E= TM\backslash 0$, the connection met in this section plays a special role in Finsler geometry and is generically referred to as the {\em non-linear connection}  (which has to be distinguished from the {\em Finsler connections} which we shall introduce later on). The choice of letter $\mathcal{N}$ for the connection recalls  ``Non-linear''.


%
%

\subsection{Covariant derivatives} \label{mfl}
A local section of the bundle $\pi\colon E\to B$, is a map $s\colon U\to E$, defined on an open subset $U\subset B$, such that $\pi\circ s=Id_U$. Any section induces a map $Ts\colon TU\to TE$
\[
Ts\colon (x^i,\xi^i) \mapsto (x^i, s^a(x), \xi^i,  \frac{\p s^a}{\p x^k}  \, \xi^k)
\]
which can be composed with the projection defining the connection $\nu\colon TE\to VE$ so as to obtain the covariant derivative (we shall use $\nabla$ for connections which are known to be linear)
\[
D s= \nu\circ Ts\colon TU\to VE.
\]
Using Req.\ (\ref{ome}) we find its expression in coordinates
\[
D s\colon (x^i,\xi^i) \mapsto  (x^i, s^a(x), 0, D_\xi s^a)
\]
where
\[
 D_{\xi} s^a=\frac{\p s^a}{\p x^k} \, \xi^k+N^a_k(x, s(x)) \xi^k .
\]
Although the covariant derivative has been defined on local sections it is clear that the section needs to be defined just over some curve on $B$ provided $\xi$ is chosen tangent to the curve.

Finally, it is important to observe that  the non-linear connection is positive homogeneous of degree one, iff for every positive function $f\colon U\to \mathbb{R}$
\[
D_\xi (f s)^a=(\p_\xi f) s^a+f D_\xi s^a .
\]

\begin{remark} \label{mdl}
If $E$ is a vector bundle then $VE=E\times_B E$ and we can further project on the second factor
\[
TU \xrightarrow{Ts} TE \xrightarrow{\nu} VE=E\times_B E\xrightarrow{\pi_2} E
\]
thus for any given section the covariant derivative sends an element of the vector bundle to another element with the same base point
\[
(x^i,\xi^i) \mapsto  (x^i,  D_\xi s^a).
\]
\end{remark}

If $E=TM\backslash 0$, $B=M$, then $s\colon U \to E$ is a nowhere vanishing vector field.
We say that a nowhere vanishing vector field $V\colon x(I)\to TM\backslash 0$, defined over the image of a curve $x\colon I\to M$, is {\em parallelly transported} if $D_{\dot{x}} V=0$. Since $E$ is linear it is also possible to consider {\em linear connections} which are those which in coordinates read
\[
N^a_k(x, y)=N^a_{jk}(x)\,y^j,
\]
in which case we recover the usual (Koszul) covariant derivative.\footnote{The placement of the indices has been chosen so as to recover the formula $\tau_{ij}=D_{e_i} e_j-D_{e_j} e_i-[e_i,e_j]$ for the torsion and $D_{e_i}e_j=N^k_{ji} e_k$, in the linear connection case. Our conventions are  the same of \cite{misner73} also for what concerns wedge product and exterior differential (Spivak \cite{spivak79}), which are different from those of Kobayashi-Nomizu \cite{kobayashi63}.}

Actually, due to the existence of the involution $j\colon TTM\to TTM$ we can also define a map
\[
\overset{flip}{D}
 s=\nu\circ j \circ Ts\colon TU\to VE.
\]
which despite the notation is not really a covariant derivative as it is non-linear in the entry which selects the direction of differentiation.
Using Eq.\ (\ref{ome}) we find its expression in coordinates
\[
\overset{flip}{D}s\colon (x^i,\xi^i) \mapsto  (x^i, \xi^i, 0, \overset{flip}{D}_\xi s^a)
\]
where
\[
 \overset{\!\!\! flip}{D_{\xi}} s^a=\frac{\p s^a}{\p x^k} \, \xi^k+N^a_k(x, \xi) s^k .
\]
For every function $f\colon U\to \mathbb{R}$ the flipped covariant derivative satisfies
\[
\overset{\!\!\! flip}{D_{\xi}} (f s)^a=(\p_\xi f) s^a+f \overset{\!\!\! flip}{D_{\xi}} s^a .
\]
The flipped-parallel transport is obtained imposing $\overset{\!\!\! flip}{D_{\dot{x}}} V=0$ along the curve.
Here the parallel transport  is independent of the parametrization of the curve iff   $\mathcal{N}$ is positively homogeneous of degree one,   which is  usually the case in  Finsler space theory. Indeed, some author working in Finsler geometry refer to this notion as the {\em  covariant derivative} \cite[Eq.\ (5.33)]{shen01}. In what follows we shall not use this covariant derivative.

\subsection{Curvature and torsion of the non-linear connection}
In a manifold $E$ with local coordinates $\{ z^\mu\}$ a vector valued form reads\footnote{As mentioned, we use the Spivak \cite{spivak79} convention $\alpha\wedge \beta =\alpha \otimes \beta -\beta \otimes \alpha$, for any two one-forms $\alpha$ and $\beta$ (compare with Kobayashi-Nomizu). This convention allows one to write $F=\dd A$ as the relation between the electromagnetic field and potential.}
\begin{align*}
K&\colon E\to \wedge^k\, T^*E \otimes TE\\
K&=K^{\mu}_{\lambda_{1} \ldots \lambda_{k}} \frac{1}{k!} \, d
z^{\lambda_{1}} \wedge \ldots \wedge d z^{\lambda_{k}} \otimes
\p_{\mu}.
\end{align*}
The Fr\"olicher-Nijenhuis bracket of vector valued forms can be defined through the  coordinate expression
\begin{align*}
[K,L]&=(K^{\nu}_{\lambda_{1} \ldots
\lambda_{k}} \p_{\nu} L^{\mu}_{\lambda_{k+1} \ldots
\lambda_{k+l}} \!\!-(-1)^{kl}L^{\nu}_{\lambda_{1} \ldots
\lambda_{l}} \p_{\nu} K^{\mu}_{\lambda_{l+1} \ldots
\lambda_{k+l}} \!\!\\ &-k K^{\mu}_{\lambda_{1} \ldots \lambda_{k-1}
\nu}
\p_{\lambda_{k}} L^{\nu}_{\lambda_{k+1} \ldots \lambda_{k+l}}\\
{}&+(-1)^{kl} l L^{\mu}_{\lambda_{1} \ldots \lambda_{l-1} \nu}
\p_{\lambda_{l}}K^{\nu}_{\lambda_{l+1} \ldots
\lambda_{k+l}})\frac{1}{(k+l)!} \, d z^{\lambda_{1}} \wedge \ldots
\wedge d z^{\lambda_{k+l}} \otimes \p_{\mu}
\end{align*}
It generalizes the usual Lie brackets for vector fields.

The curvature of a bundle $\pi_B \colon E \to B$ endowed with a connection $\nu\colon TE\to VE$ is defined through\footnote{With respect to the reference we modify the coefficient so as to obtain the usual expression in the Riemannian case. Our choice coincides with the more recent choice adopted by Modugno and collaborators.} \cite{modugno91}
\[
R=-[\mathcal{N},\mathcal{N}].
\]
This is a slight abuse of notation since $\mathcal{N}\colon E\to T^*B\otimes TE$, so $\mathcal{N}$ appearing in the previous formula is really $\pi_B^*\mathcal{N}$.
Since $\mathcal{N}$ is a  vector valued one-form
\begin{align*}
R^\mu_{\alpha \beta}&=-\frac{1}{2}[\mathcal{N}^\nu_{\alpha} \p_\nu \mathcal{N}^\mu_{\beta}+\mathcal{N}^\nu_\alpha \p_\nu \mathcal{N}^\mu_\beta-\mathcal{N}^\mu_{\nu}\p_\alpha \mathcal{N}^\nu_\beta-\mathcal{N}^\mu_\nu \p_\alpha \mathcal{N}^\nu_\beta -(\beta/\alpha)]\\
& =- [\mathcal{N}^\nu_{\alpha} \p_\nu \mathcal{N}^\mu_{\beta} -\mathcal{N}^\mu_\nu \p_\alpha \mathcal{N}^\nu_\beta-\mathcal{N}^\nu_{\beta} \p_\nu \mathcal{N}^\mu_{\alpha} +\mathcal{N}^\mu_\nu \p_\beta \mathcal{N}^\nu_\alpha] ,
\end{align*}
where $(\beta/\alpha)$ stands for ``plus terms with $\beta$ an $\alpha$ exchanged''.

In our case the coordinates are $(x^1,\cdots, x^n, y^1, \cdots, y^n)$, so let us introduce a notation according to which $z^i=x^i$ and $z^a=y^a$.
From Eq.\ (\ref{vof}) we read the components $\mathcal{N}^i_j=\delta^i_j$, $\mathcal{N}^a_b=\mathcal{N}^i_a=0$ while $\mathcal{N}^a_k=-N^a_k$. This makes several terms vanish and we are left with
\[
R=-(\p_i \mathcal{N}^a_{j} - \p_j \mathcal{N}^a_{i} + \mathcal{N}^b_{i} \p_b \mathcal{N}^a_{j}   -\mathcal{N}^b_{j} \p_b \mathcal{N}^a_{i} ) \, \frac{\p}{\p y^a}\otimes \dd x^i \otimes \dd x^j
\]
Thus
\[
R=( \frac{\p}{\p x^i} N^a_j - \frac{\p}{\p x^j} N^a_i  +N^b_{j} N^a_{bi}  -N^b_{i} N^a_{bj} ) \,\frac{\p}{\p y^a}\otimes \dd x^i \otimes \dd x^j
\]
which coincides with Eq.\ (\ref{cur}).

While every connection admits a curvature, in order to define the torsion it is necessary to specify a soldering form \cite{modugno91}.
The torsion of a connection $\mathcal{N}$ with respect to a soldering form $e\colon E\to T^*B\otimes_E VE$ is
\begin{equation} \label{tor}
\tau=2 [\mathcal{N},e] \ \colon E \to \Lambda^2 T^*B\otimes_E VE
\end{equation}
In components it is given by
\[
\tau=[\p_i e_j^a+\mathcal{N}^b_j \frac{\p}{\p y^b} e^a_{j}-(\frac{\p}{\p y^b} \mathcal{N}^a_i ) e^b_j-(i/j)] \,\frac{\p}{\p y^a}\otimes \dd x^i \otimes \dd x^j.
\]
Since $E=TM\backslash 0$ we can use as soldering form the canonical one with coordinates $e^a_i=\delta^a_i$, so that
\[
\tau=(N_{ji}^k-N_{ij}^k) \,\frac{\p}{\p y^k}\otimes \dd x^i \otimes \dd x^j.
\]
Observe that if $\mathcal{N}$ is linear we have for every $X,Y\colon M\to TM$
\[
\tau(X,Y)=D_X Y-D_Y X-[X,Y]
\]
since this formula holds whenever $X$ and $Y$ are replaced by elements of the holonomic base $\p/\p x^i$.

\subsection{Relationship between non-linear connections and sprays}

In this section let $E=TM\backslash 0$, $B=M$.

\begin{definition}
Let $\eta:I \to M$, $I\subset \mathbb{R}$, be a curve with non-vanishing tangent vector (regular) $\dot{\eta}\colon I \to TM$, then $s=\dot{\eta}\circ \eta^{-1}$ is a section of $E$ defined over a curve and we say the curve is a {\em geodesic} if  $D_s s=0$.
\end{definition}

\begin{theorem}
Let $G$ be a spray and let $\mathcal{N}$ be a connection, then the map
\[
T\colon E\times_B TB \to VE
\]
defined locally by
\begin{equation} \label{pol}
T=T^i_{\, j}(x,y) \frac{\p}{\p y^i}\otimes \dd x^j, \qquad T^i_{\, j}= N^i_j-\frac{\p G^i}{\p y^j},
\end{equation}
is globally well defined.
\end{theorem}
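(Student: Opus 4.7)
The statement is a coordinate-invariance check: the object $T^i_{\,j}\,\partial/\partial y^i\otimes dx^j$ transforms as a section of $VE\otimes_E (E\times_B T^*B)$ even though neither $N^i_j$ nor $\partial G^i/\partial y^j$ is by itself tensorial. My plan is therefore to compute the transformation law for $\partial G^i/\partial y^j$ under the cocycle from Section 2, and observe that its anomalous term coincides with the one appearing in Eq.~(\ref{gam}) for $N^i_j$, so that the two cancel in the difference.

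Concretely, I would start from Eq.~(\ref{njv}),
\[
\tilde G^i(\tilde x,\tilde y)=\frac{\partial\tilde x^i}{\partial x^p}\,G^p(x,y)-\tfrac{1}{2}\frac{\partial^2\tilde x^i}{\partial x^p\partial x^q}\,y^p y^q,
\]
and differentiate with respect to $\tilde y^j$. Since $y^p=(M^{-1})^p_{\,r}\tilde y^r=\frac{\partial x^p}{\partial\tilde x^r}\tilde y^r$, the chain rule gives $\partial y^p/\partial\tilde y^j=\partial x^p/\partial\tilde x^j$, and the derivatives of $\partial\tilde x^i/\partial x^p$ and $\partial^2\tilde x^i/\partial x^p\partial x^q$ with respect to $\tilde y^j$ vanish (they are functions of $x$ only). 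Using symmetry of second partials to combine the two terms arising from $y^p y^q$, I expect to find
\[
\frac{\partial\tilde G^i}{\partial\tilde y^j}
=\frac{\partial\tilde x^i}{\partial x^p}\frac{\partial G^p}{\partial y^k}\frac{\partial x^k}{\partial\tilde x^j}
-\frac{\partial^2\tilde x^i}{\partial x^p\partial x^k}\,y^p\,\frac{\partial x^k}{\partial\tilde x^j}.
\]

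Next I would rewrite the transformation rule~(\ref{gam}) for the tangent bundle case ($M^a_{\,b}=\partial\tilde x^a/\partial x^b$, $M^a_{\,b,k}=\partial^2\tilde x^a/\partial x^b\partial x^k$),
\[
\tilde N^i_j=\frac{\partial\tilde x^i}{\partial x^p}N^p_k\frac{\partial x^k}{\partial\tilde x^j}-\frac{\partial^2\tilde x^i}{\partial x^p\partial x^k}\,y^p\,\frac{\partial x^k}{\partial\tilde x^j}.
\]
Subtracting the two expressions, the inhomogeneous second-derivative terms cancel and I obtain
\[
\tilde T^i_{\,j}=\tilde N^i_j-\frac{\partial\tilde G^i}{\partial\tilde y^j}
=\frac{\partial\tilde x^i}{\partial x^p}\Bigl(N^p_k-\frac{\partial G^p}{\partial y^k}\Bigr)\frac{\partial x^k}{\partial\tilde x^j}
=\frac{\partial\tilde x^i}{\partial x^p}\,\frac{\partial x^k}{\partial\tilde x^j}\,T^p_{\,k},
\]
which is precisely the transformation rule required of a section of $VE\otimes_E (E\times_B T^*B)$, identifying $VE$ with $E\times_B E$ via the canonical isomorphism recalled in Section 2.

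The proof is purely a bookkeeping exercise, so there is no conceptual obstacle; the only care needed is in the chain-rule computation of $\partial\tilde G^i/\partial\tilde y^j$ and in keeping track that the symmetrization of $p,q$ in the term $y^p y^q$ is what allows the inhomogeneous piece to match the one produced by $N^a_{\,b,k}y^b$ in (\ref{gam}).
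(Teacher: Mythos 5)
Your proposal is correct and follows exactly the paper's own argument: differentiate the cocycle (\ref{njv}) for the spray with respect to $\tilde y^j$, observe that the resulting inhomogeneous second-derivative term matches the one in the cocycle (\ref{gam}) for $N^i_j$ (specialized to $M^a_{\,b}=\p\tilde x^a/\p x^b$), and conclude that the difference transforms tensorially. The chain-rule and symmetrization details you flag are handled the same way in the paper, so there is nothing to add.
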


\begin{proof}
The cocycle for $G$ has been calculated in Eq.\ (\ref{njv})
\begin{align*}
 \tilde G^i(\tilde x,\tilde y) &= \frac{\p \tilde x^i}{\p x^j } \, G^j(x,y)-\frac{1}{2}\frac{\p^2 \tilde x^i}{\p x^j \p x^k } \frac{\p x^k}{\p \tilde x^l }  \frac{\p x^j}{\p \tilde x^m } \,\tilde y^l \tilde y^m .
\end{align*}
Differentiating
\[
\frac{\p \tilde G^i}{\p \tilde y^s} =\frac{\p \tilde x^i}{\p x^j } \frac{\p x^k}{\p \tilde x^s} \frac{\p G^j}{\p y^k}-\frac{\p^2 \tilde x^i}{\p x^j \p x^k }  \frac{\p x^j}{\p \tilde x^s}\, y^k.
\]
which is the same cocycle of $N^i_s$, thus the difference has the right cocycle which makes the definition independent of the  coordinate system.
\end{proof}

 Let us compose the Liouville field $L\colon E \to E \times_{M} TM$ with the connection $\mathcal{N}$. We obtain in coordinates
\[
\mathcal{N}\circ L\colon (x^i,y^i) \mapsto (x^i,y^i,y^i) \mapsto  (x^i,y^i, y^i, -N^i_k(x,y) y^k).
\]
We recognize the map defining a second order differential equation with
\begin{equation} \label{vkf}
2G^i(x,y)=N^i_k(x,y) y^k .
\end{equation}

\begin{theorem}
Let $G$ be a spray and let $\mathcal{N}$ be a connection. The geodesics for $\mathcal{N}$ coincide with the paths for the spray if and only if $G=\mathcal{N}\circ L$.
\end{theorem}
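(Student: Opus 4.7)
The plan is to reduce both sides of the claimed equivalence to a pointwise algebraic identity on $E$, namely $2G^i(x,y) = N^i_k(x,y)\,y^k$, and then invoke the computation in Eq.\ (\ref{vkf}) which identifies this identity with the relation $G = \mathcal{N}\circ L$.

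First I would write out the geodesic equation explicitly. By the definition, a regular curve $\eta\colon I \to M$ is a geodesic iff $D_s s = 0$ with $s = \dot\eta\circ \eta^{-1}$. Applying the coordinate formula for the covariant derivative from Section \ref{mfl} with $\xi = \dot\eta$ yields
\[
D_{\dot\eta}\dot\eta = \bigl(\ddot\eta^i + N^i_k(\eta,\dot\eta)\,\dot\eta^k\bigr)\,\frac{\p}{\p y^i},
\]
so the geodesic equation reads $\ddot\eta^i + N^i_k(\eta,\dot\eta)\,\dot\eta^k = 0$. On the other hand, the paths of the spray are by definition the solutions of $\ddot x^i + 2G^i(x,\dot x) = 0$.

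Next, I would compare the two ODEs. Since through every point $(x_0,y_0) \in E = TM\setminus 0$ there passes a (local) integral curve of the spray with $\dot x(0) = y_0$, and likewise a geodesic with $\dot\eta(0) = y_0$, the two families of curves coincide if and only if the two second order equations agree pointwise on $E$, i.e.\ iff
\[
2G^i(x,y) = N^i_k(x,y)\,y^k \qquad \text{for every } (x,y)\in E.
\]
The nontrivial direction uses uniqueness of solutions given initial position and velocity; the trivial direction just plugs a common curve into both equations.

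Finally, I would recognize that this pointwise identity is precisely Eq.\ (\ref{vkf}), which was obtained by reading off the coordinate expression of the composition $\mathcal{N}\circ L\colon (x^i,y^i)\mapsto (x^i,y^i,y^i,-N^i_k(x,y)\,y^k)$ and comparing with the form of a second order equation. Hence the condition $2G^i = N^i_k y^k$ is equivalent to $G = \mathcal{N}\circ L$, completing the proof. No real obstacle is expected; the only thing to be careful about is the quantifier exchange in the ``only if'' direction, which is handled by the existence and uniqueness of integral curves on the slit tangent bundle $E$.
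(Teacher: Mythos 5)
Your proposal is correct and follows essentially the same route as the paper: both reduce the statement to the pointwise identity $2G^i(x,y)=N^i_k(x,y)\,y^k$ by evaluating the two second order equations on a curve with arbitrary initial position and velocity, and then identify this identity with $G=\mathcal{N}\circ L$ via the coordinate expression in Eq.\ (\ref{vkf}). Your explicit remark about the quantifier exchange in the ``only if'' direction is exactly the point the paper handles by saying ``since the initial conditions are arbitrary.''
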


Given a connection $\mathcal{N}$, the spray $G=\mathcal{N}\circ L$, will be called {\em the spray of the connection}.

\begin{proof}
Let $x(t)$ be a regular $C^1$ curve which solves the spray equation $\ddot x^i+2G^i(x,\dot{x})=0$ with initial conditions $x(t)=x_0$, $\dot{x}(0)=\dot{x}_0$. We have
\[
D_{\dot{x}}\dot{x}^i=\ddot{x}^i+N^i_k(x,\dot{x}) \dot{x}^k,
\]
thus it is a geodesic only if  at the origin $2G^i(x_0,\dot{x}_0)=N^i_k(x_0,\dot{x}_0) \dot{x}_0$ and since the initial conditions are arbitrary this equation must hold generally, that is $\mathcal{N}\circ L=G$. Conversely, if this equation holds it is immediate that the geodesics and spray paths coincide.
\end{proof}

\begin{theorem} \label{vkk}
Let $G$ be a spray, then   there is a connection  whose spray is $G$ ($G=\mathcal{N}\circ L$), e.g.\ that defined in coordinates by
\begin{equation} \label{blq}
N^i_k(x,y)= \frac{\p G^i}{\p y^k}.
\end{equation}
Let $G$ be a spray, and let $\mathcal{N}$ be a connection whose spray is $G$, then there is a map $T\colon E\times_B TB \to VE$ such that
 $T\circ L=0$, namely such that
\begin{equation} \label{ndo}
T^i_{\, j}(x,y) y^j=0,
\end{equation}
and
\begin{equation} \label{blp}
N^i_k(x,y)= \frac{\p G^i}{\p y^k}+T^i_{\, k}.
\end{equation}
 $T$ is positive homogeneous of degree one in $y$ iff  $\mathcal{N}$ is.
\end{theorem}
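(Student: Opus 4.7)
The plan is to dispatch the three claims in order, leaning on the previous theorem (which showed that $N^i_j$ and $\p G^i/\p y^j$ have the same cocycle, so their difference is a well defined tensor $E\times_B TB \to VE$) and on Euler's identity applied to the homogeneity of $G$.

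\textbf{Existence via (\ref{blq}).} First I would verify that the candidate $N^i_k := \p G^i/\p y^k$ does transform under change of coordinates according to the connection cocycle (\ref{gam}). This is immediate: the previous theorem showed that $\p G^i/\p y^s$ obeys exactly the same cocycle as $N^i_s$, so taking $T\equiv 0$ produces a bona fide non-linear connection. Then, because $G$ is a spray, $G^i(x,\cdot)$ is positive homogeneous of degree two in $y$, and Euler's identity gives $y^k\,\p G^i/\p y^k = 2G^i$. Therefore $N^i_k(x,y)\,y^k = 2G^i(x,y)$, which is precisely (\ref{vkf}); by the preceding theorem this means $\mathcal{N}\circ L = G$.

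\textbf{The decomposition (\ref{blp}).} Now let $\mathcal{N}$ be any connection whose spray is $G$, and set
\[
T^i_{\,k}(x,y) := N^i_k(x,y) - \frac{\p G^i}{\p y^k}(x,y).
\]
By the previous theorem this defines a global section $T\colon E \times_B TB \to VE$, and (\ref{blp}) is a tautology. For (\ref{ndo}) I would contract with $y^k$:
\[
T^i_{\,k}\,y^k \;=\; N^i_k\,y^k \;-\; y^k\,\frac{\p G^i}{\p y^k} \;=\; 2G^i - 2G^i \;=\; 0,
\]
where the first term equals $2G^i$ by the hypothesis $\mathcal{N}\circ L = G$ together with (\ref{vkf}), and the second term equals $2G^i$ by Euler's identity applied again to the degree-two homogeneity of $G^i$. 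So $T\circ L = 0$.

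\textbf{Homogeneity.} Because $G^i$ is always positive homogeneous of degree two in $y$, its first $y$-derivative $\p G^i/\p y^k$ is automatically positive homogeneous of degree one. The identity $N^i_k = \p G^i/\p y^k + T^i_{\,k}$ then shows that $N^i_k$ is positive homogeneous of degree one in $y$ if and only if $T^i_{\,k}$ is, which completes the theorem. The only substantive step in the whole argument is Euler's identity; the rest is bookkeeping based on the cocycle result already proved.
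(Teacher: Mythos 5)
Your proposal is correct and follows essentially the same route as the paper: the well-definedness of both the candidate connection (\ref{blq}) and the difference tensor $T$ rests on the cocycle computation of the preceding theorem, and the identities $N^i_k y^k = 2G^i = y^k\,\p G^i/\p y^k$ (the latter being Euler's identity for the degree-two homogeneous spray) yield (\ref{ndo}). You merely spell out the homogeneity claim and the verification that (\ref{blq}) has spray $G$ slightly more explicitly than the paper does.
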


Observe that the map (imposing positive homogeneity) ``connection $\to$ spray'' is surjective but not injective.

\begin{proof}
That the cocycles of $N^i_k$ and $G^i$ are the same  has been proved in the previous theorem. For the second statement as $\mathcal{N}$ has spray $G$ we have $N^i_{k} y^k=2 G^i$, and by the positive homogeneity of the spray $N^i_{k} y^k=\frac{\p G^i}{\p y^k} \,y^k$ which implies that $T^i_k$ satisfies Eq.\ (\ref{ndo}).
\end{proof}

We have already shown that the map  $\tau\colon E\times_B TB\otimes_B TB \to VE$ given locally by
\[
\tau=(N^k_{ji}- N^k_{ij}) \,\frac{\p}{\p y^k}\otimes \dd x^i \otimes \dd x^j
\]
 is the {\em torsion} of the non-linear connection.
If the spray of the connection is $G$, then according to Theorem \ref{vkk}
\[
\tau=[\frac{\p}{\p y^j} T^k_i-\frac{\p}{\p y^i} T^k_j] \,\frac{\p}{\p y^k}\otimes \dd x^i \otimes \dd x^j.
\]
where $T^k_j$ is defined by Eq.\ (\ref{blp}).

\begin{proposition}
Every (non-linear) connection positive homogeneous of degree one is determined  by its spray  and by its torsion as follows
\begin{equation} \label{bln}
N^i_k(x,y)= \frac{\p G^i}{\p y^k}-\frac{1}{2}\,\tau^i_{jk} y^j,
\end{equation}
and $T^k_j=-\frac{1}{2}\tau^k_{ij} y^i$.
\end{proposition}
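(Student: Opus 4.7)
The plan is to reduce the claim to identifying the tensor $T$ that appeared in Theorem \ref{vkk}, since we already know from that theorem that any positively homogeneous connection with spray $G$ can be written as $N^i_k=\partial G^i/\partial y^k + T^i_k$, with $T$ positively homogeneous of degree one and satisfying $T^i_k y^k=0$. Thus it suffices to show $T^k_j=-\tfrac{1}{2}\tau^k_{ij}y^i$; substituting this back into the decomposition then gives (\ref{bln}).

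First I would invoke the expression for the torsion written just before the Proposition, namely $\tau^k_{ij}=\partial T^k_i/\partial y^j-\partial T^k_j/\partial y^i$, which arises because the Hessian $\partial^2 G^k/\partial y^i\partial y^j$ is symmetric and so drops out of the antisymmetrization $N^k_{ji}-N^k_{ij}$. The task then becomes: contract this identity with $y^j$ and recover $T$.

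For the first term, positive homogeneity of degree one of $T^k_i$ together with Euler's relation yields $y^j\,\partial T^k_i/\partial y^j=T^k_i$. For the second term, I would differentiate the constraint $T^k_j y^j=0$ with respect to $y^i$, which gives $y^j\,\partial T^k_j/\partial y^i=-T^k_i$. Adding the two contributions produces $y^j\tau^k_{ij}=2T^k_i$, and the antisymmetry $\tau^k_{ij}=-\tau^k_{ji}$ rewrites this as the stated formula $T^k_j=-\tfrac{1}{2}\tau^k_{ij}y^i$.

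No step is genuinely hard here; the only point to watch is that both identities used to evaluate the contraction — Euler's relation for $T$ and the differentiated constraint $T^k_j y^j=0$ — are consequences of the standing positive homogeneity assumption, which is exactly the hypothesis that ensures the spray-plus-torsion data uniquely reconstruct the connection.
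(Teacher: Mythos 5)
Your proposal is correct and is essentially the paper's own argument: both rest on the decomposition $N^i_k=\partial G^i/\partial y^k+T^i_k$ from Theorem \ref{vkk} together with Euler's relation for the degree-one homogeneous $T^k_i$ and the differentiated constraint $T^k_jy^j=0$. The only difference is bookkeeping — you contract $\tau^k_{ij}$ with $y^j$ and evaluate term by term, while the paper first shows the symmetrized derivative $(\partial_i T^k_j+\partial_j T^k_i)y^i$ vanishes and then isolates $T^k_j$ — which is the same computation rearranged.
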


\begin{proof}
 As the connection is positive homogeneous of degree one, then so is $T^k_i$ and we have
\[
(\frac{\p}{\p y^i} T^k_j+\frac{\p}{\p y^j} T^k_i) y^i=T^k_j+\frac{\p(T^k_i y^i)}{\p y^j} -T^k_i \delta^i_j=0
\]
Thus
\[
T^k_j=\frac{1}{2}[-\tau^k_{ij}+\frac{\p}{\p y^i} T^k_j+\frac{\p}{\p y^j} T^k_i] y^i=-\frac{1}{2}\tau^k_{ij} y^i.
\]
Finally, Eq. (\ref{blp}) reads as Eq.\ (\ref{bln}), which proves the  proposition.
\end{proof}

\section{The linear connection and the metric}

In this section we introduce the linear connections of Finsler geometry.

\subsection{Finsler connections and covariant derivatives of fibered morphisms}

Let $B=M$, $E=TM\backslash 0$. We are interested in objects of the form
\[
O\colon E \to E\times_M \overbrace{TM \otimes_M \cdots \otimes_M TM}^{a}  \otimes_M \overbrace{T^*M\otimes_M \cdots \otimes_M T^*M}^{b}
\]
where we demand that the first map on $E$ be just the identity (this condition is notationally expressed removing the first factor $E\times$ and replacing the instances of $\otimes_M$ with $\otimes_E$). That is these maps are fibered morphisms: $\pi_B\circ O=\pi_B$.

We call these maps  {\em tensors} of type $(a,b)$. They arise naturally in the theory of connections because the vertical space $VE$ is naturally isomorphic to $E\times_M TM$, thus any map of the form
\[
O'\colon E\to   VE \otimes_E \cdots \otimes_E VE \otimes_E T^*M\otimes_E \cdots \otimes_E T^*M
\]
can actually be regarded as a tensor. The transformation rule of a tensor is the usual one, namely the Jacobian $\p \tilde{x}/\p x$ appears several times together with its inverse transpose depending on the type of tensor.
The only difference is that the components of our tensors depend on an argument belonging to $E$ rather than $M$. Fibered morphisms are generalizations of the concept of section $s\colon M\to (TM)^a\otimes (T^*M)^b$, indeed from a section we obtain a corresponding fibered morphism through $O=s\circ \pi_B$.
Some authors prefer to characterize these objects through their transformation properties.  They are called d-objects, $M$-objects or Finsler objects.

%
%

The Fr\"olicher-Nijenhuis bracket  provides a derivation in the graded algebra of vertical valued forms. These forms belong to the larger family of tensors of type $(1,b)$.

Let us introduce the { Finsler connections}  \cite{akbarzadeh63,akbarzadeh88,abate94,anastasiei96}.
Let $B=M$ and $E=TM\backslash 0$.

\begin{definition}
A {\em Finsler connection}  is a pair $(\mathcal{N},\nabla)$ where $\mathcal{N}$ is a non-linear connection on the bundle $\pi_B\colon E\to B$, namely a projection $\nu\colon TE\to VE$, and $\nabla$ is a linear connection on the bundle $\pi_E\colon \tilde{E}\to E$, namely a projection $\tilde\nu\colon T\tilde{E}\to V\tilde{E}$, where $\tilde{E}=VE$. Furthermore, $\mathcal{N}$ and $\nabla$ must be positive homogeneous of degree one with respect to the fiber of $E$ (see Eq.\ (\ref{cla}) for a clarification).
\end{definition}

\begin{remark}
We stress that $\nabla$ does not act over arbitrary sections $\check{X}:E\to TE$ but only over those sections with image in $VE$, and also returns sections of the  same vector bundle. Some authors \cite{miron94} work instead with linear (Koszul) connection on the bundle $TE \to E$. However, this approach requires to impose additional conditions on the linear connection and introduces  local coefficients which in the end are not really necessary.
\end{remark}

Let us consider a map $f:E\to E$ with the property, $\pi_B\circ f=\pi_B$ or, which is the same,  a section (denoted in the same way) $f\colon E\to \tilde{E}$
\[
f\colon (x^i,y^a)\mapsto (x^i,y^a,f^a(x,y)).
\]
The Liouville vector field provides an example, $f^a=y^a$. These sections of $VE$ are important since they provide a special type of non-multi-valued relation on $E \times_M E$, for which the Liouville field is the identity.
Accordingly to Remark \ref{mdl} with $\tilde{E}$ in place of $E$, and $E$ in place of $B$, each section  $f\colon E \to \tilde{E}$
implies a map $\nabla f\colon TE \to \tilde{E}$, given in coordinates by
\[
\nabla f\colon (x^i,y^a,\dot{x}^i,\dot{y}^a) \mapsto (x^i,y^a,\nabla_{\dot{x}+\dot{y}} f^a) ,
\]
where $\dot{x}+\dot{y}$ is a short-hand for $\dot{x}^i\frac{\p}{\p x^i}+\dot{y}^a \frac{\p }{\p y^a}$. It is possible to regard this vector as the sum of an horizontal and vertical part, thus we can write
\[
\nabla f^a=(\nabla_{\delta/\delta x^i} f^a)\, \dd x^i+(\nabla_{\p/\p y^b} f^a) \,\delta y^b.
\]
Actually, here it is convenient to consider the composition
\[
{\nabla} f\circ \mathcal{N}\colon E\times TM \to VE, \qquad  (x^i,y^a,\dot{x}^i) \mapsto (x^i,y^a,\nabla_{\mathcal{N}(\dot{x})} f^a) ,
\]
which acts only on the horizontally lifted vectors (here \[
\mathcal{N}(\dot{x})=\dot{x}^i[\frac{\p}{\p x^i}-N^a_i(x,y)\frac{\p}{\p y^a}], \quad )
\] and
\[
{\nabla} f\circ \nu \colon TE \to VE, \qquad  (x^i,y^a,\dot{x}^i,\dot{y}^a) \mapsto (x^i,y^a,\nabla_{[\dot{y}^a+N^a_k \dot{x}^k]\frac{\p}{\p y^a}} f^a)
\]
which is essentially the action of the covariant derivative on the vertical vectors.
We set (here we use the linearity of $\nabla$)
\begin{align*}
\nabla^H f\colon & E\times_M TM \to E,  \\
\nabla_{\dot{x}}^H f^a &:=\nabla_{N(\dot{x})} f^a(x,y)=\frac{\p f^a}{\p x^i}\dot{x}^i-N^b_i(x,y)\frac{\p f^a}{\p y^b}\dot{x}^i+H_{bi}^a (x,y)f^b \dot{x}^i ,
\end{align*}
where
\[
H_{bi}^a \frac{\p}{\p y^a}=\nabla_{\delta/\delta x^i} \frac{\p}{\p y^b}.
\]
Since $\p/\p y^b$, $\delta/\delta x^i$ transform as tensors, the coefficients $H^a_{bi}$ have the typical cocycle of connections coefficients on $M$
\[
\tilde H^a_{b i} =\frac{\p \tilde x^a}{\p x^c } \frac{\p x^d}{\p \tilde x^b} \frac{\p x^j}{\p \tilde x^i} H^c_{d j} +\frac{\p^2  x^c}{\p \tilde x^b  \p  \tilde x^i }  \frac{\p \tilde x^a}{\p  x^c}  .
\]
Similarly, we set
\begin{align*}
\nabla^V \! f & \colon E\times_M TM=VE \to E,  \\
\nabla^V_{\dot{x}} f^a(x,y)&:=\nabla_{\dot{x}^c e_c} f^a=\dot{x}^c\frac{\p f^a}{\p y^c}+V^a_{bc}(x,y) f^b \dot{x}^c ,
\end{align*}
where
\[
V^a_{bc}\frac{\p}{\p y^a}=\nabla_{{\p}/{\p y^c}} \frac{\p}{\p y^b}.
\]
For shortness we shall also write  $\nabla^H_i$ for $\nabla^H_{\p/\p x^i}$, and  $\nabla^V_i$ for $\nabla^V_{\p/\p x^i}$. It is also common to use a short bar for $\nabla^H$ and a long bar for $\nabla^V$, e.g.\
\[
f^a_{ \vert i} :=\nabla^H_i f^a, \qquad f^a \vert_i :=\nabla^V_i f^a.
\]
From $\nabla$ we have obtained derivatives $\nabla^H$, $\nabla^V$, which act on maps $f\colon  E \to E$ which are fibered morphisms,  $\pi_B\circ f=\pi_B$, returning maps of the same type. Similar definitions with $r:E\to \mathbb{R}$ a real function show that
\[
\nabla^H_i r=\frac{\delta}{\delta x^i} r, \quad \nabla^V_i r= \frac{\p}{\p y^i} r.
\]
Since these derivatives of maps to $E$ are linear they can be extended to tensors in the usual way, namely imposing that they respect the coupling between one-forms and vectors and then extending to tensor products. For instance, over a tensor $g_{a b}(x,y)$ of type $(0,2)$ they act as follows
\begin{align}
\nabla^H_{i} \,g_{jk} &=\frac{\delta}{\delta x^i}\, g_{jk}-H^l_{ji} \, g_{lk}-H^l_{ki}\, g_{jl} , \label{vlf}\\
\nabla^V_{d}\, g_{ab} &=\frac{\p}{\p y^d}\, g_{ab}-V^c_{ad} \, g_{cb}-V^c_{bd}\, g_{ac}. \label{vlg}
\end{align}

It can be useful to observe that with respect to the holonomic base $(\p/\p x^i, \p/\p y^a)$ the connection coefficients of $\nabla$ are given by $V^a_{bc}$ and $\check V^a_{bi}$ where
\[
\check V^a_{bi}=H^a_{bi}+N^c_i V^a_{bc}, \qquad \check V^a_{bi} \frac{\p}{\p y^a}=\nabla_{\p/\p x^i}\frac{\p}{\p y^b}.
\]
The cocycle for $V^a_{bc}(x,y)$ requires some comments. If we make a transformation of coordinates on $E$ then the coefficients of $\nabla$, namely $(\check V^a_{bi},V^a_{bc})$  transform with the usual non-linear cocycle for the connection coefficients, however, under the more restricted group of coordinate changes induced by a change of coordinates on $M$, the coefficients   $V^a_{bc}$ transform as a $(1,2)$ tensor, because $\frac{\p}{\p y} \frac{\p \tilde x}{\p x}=0$. Also recall that since $E=TM$, the coordinate transformations in which we are interested preserve the linear structure of the fiber, see Eq. (\ref{lol}). The Finsler connection is locally determined by the triple of coefficients
\[
(N^a_i, H^a_{bi}, V^a_{bc}).
\]
The positive  homogeneity condition on the  Finsler connection reads
\begin{equation} \label{cla}
N^i_b(x,sy)=sN^i_b(x,y), \quad H^a_{bi}(x,sy)= H^a_{bi}(x,y),\quad V^a_{bc}(x,sy)=V^a_{bc}(x,y),
\end{equation}
for every $s>0$.

We are now going to place a compatibility condition between the connections $\mathcal{N}$ and $\nabla$. Let $TE=HE\oplus VE$ accordingly to the splitting due to $\mathcal{N}$. We demand the following condition called {\em regularity}
\begin{equation} \label{ldg}
HE=\textrm{Ker} \,{\nabla} L
\end{equation}
where $L$ is the Liouville vector field ($f^a(x,y)=y^a$). Equivalently, for any $e\in E$ the map
\[
{\nabla} L(e)\colon T_eE\to V_eE, \qquad v\mapsto {\nabla}_v L(e)
\]
must have maximum rank, which implies that its kernel splits $TE$ as sum of  two subspaces of the same dimension, where the kernel is identified with $HE$.
The condition (\ref{ldg}) reads
\begin{equation} \label{mkk}
0=\nabla^H_{i} y^a=-N^a_i+H_{bi}^a (x,y)y^b.
\end{equation}
Since ${\nabla} L(e)$ has maximal rank it must be an invertible linear isomorphism if restricted to $VE$, thus
\begin{equation}
\det \nabla^V_c y^a=\det(\delta^a_c+V^a_{bc}(x,y) y^b)\ne 0.
\end{equation}
If we are given just  ${\nabla}$ then we say that this linear connection is regular if the last equation holds (Abate and Patrizio \cite{abate94} call them {\em good} connections). In this case \cite{akbarzadeh63,akbarzadeh88} the splitting of $TE$ induced by the kernel of this map provides the missing non-linear connection $\mathcal{N}$.
Thus, under the assumption of regularity  the geometry of Finsler spaces is determined by a covariant derivative $\nabla$ on $E$.

Observe, that every non-linear connection positive homogeneous of degree one can be obtained in this way, just set
\begin{align}
H^a_{bi}&= N^a_{bi}, \label{pfv} \\
V^a_{bc}&=0,
\end{align}
and define $\nabla$ through these coefficients. Given a general regular Finsler connection $(\mathcal{N},\nabla)$ we cannot conclude that Eq.\ (\ref{pfv}) holds, however, by positive homogeneity, multiplying Eq.\ (\ref{mkk}) by $y^c \p/\p y^c$
\[
y^c(H^a_{ci}- N^a_{ci})=0.
\]

\begin{proposition}
Let $t\mapsto (x(t),y(t))$ be an integral curve of the spray of $\mathcal{N}$, then it satisfies $\nabla_{\dot{x}}^H y=0$ iff $2 G^a=N^a_i(x,y) y^i=H^a_{bi} y^a y^i$, in particular this is true for the regular Finsler connections.
\end{proposition}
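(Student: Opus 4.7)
The plan is to compute $\nabla^H_{\dot{x}} y^a$ explicitly along an integral curve of the spray $G$ of $\mathcal{N}$, using the formula for the horizontal covariant derivative of a fibered morphism $f\colon E\to E$ derived in Section~4.1, applied to the Liouville section $f^a(x,y)=y^a$.

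First I would recall that along an integral curve $t\mapsto(x(t),y(t))$ of the spray $G=y^i\p/\p x^i-2G^i\p/\p y^i$, one has $\dot{x}^i=y^i$ (and $\dot{y}^i=-2G^i$); this is just the definition of a second-order equation. Next, applying the formula
\[
\nabla^H_{\dot{x}} f^a=\frac{\p f^a}{\p x^i}\dot{x}^i-N^b_i\frac{\p f^a}{\p y^b}\dot{x}^i+H^a_{bi}f^b\dot{x}^i
\]
to $f^a=y^a$ and evaluating at the point $(x(t),y(t))$ gives
\[
\nabla^H_{\dot{x}} y^a=-N^a_i(x,y)\dot{x}^i+H^a_{bi}(x,y)y^b\dot{x}^i.
\]
Substituting $\dot{x}^i=y^i$ from the integral-curve condition yields
\[
\nabla^H_{\dot{x}} y^a=-N^a_i y^i+H^a_{bi}y^b y^i,
\]
and using the identity $2G^a=N^a_i(x,y)y^i$ of Eq.~(\ref{vkf}) (which expresses that $G=\mathcal{N}\circ L$ is the spray of $\mathcal{N}$), the vanishing $\nabla^H_{\dot{x}} y^a=0$ is equivalent to the chain of equalities $2G^a=N^a_i y^i=H^a_{bi}y^b y^i$, as claimed.

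Finally, to verify the last assertion about regular Finsler connections, I would contract the regularity relation $0=-N^a_i+H^a_{bi}y^b$ of Eq.~(\ref{mkk}) with $y^i$, obtaining $N^a_i y^i=H^a_{bi}y^b y^i$, which combined with (\ref{vkf}) is precisely the condition just derived. There is no real obstacle here: the whole proposition is a direct computation, and the only subtle point is the correct interpretation of $y$ as the Liouville section (so that $\p y^a/\p x^i=0$ and $\p y^a/\p y^b=\delta^a_b$) when plugging into the formula for $\nabla^H$.
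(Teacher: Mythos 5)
Your proposal is correct and follows essentially the same route as the paper, whose entire proof is the one-line observation that $y=\dot{x}$ along an integral curve of the spray; you simply make explicit the computation $\nabla^H_{\dot x}y^a=-N^a_iy^i+H^a_{bi}y^by^i$ and the contraction of Eq.~(\ref{mkk}) with $y^i$ for the regular case, which is exactly what the paper leaves implicit. (You also tacitly correct the index typo $H^a_{bi}y^ay^i$ in the statement to $H^a_{bi}y^by^i$, which is the intended reading.)
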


\begin{proof}
For the integral curve of a spray $y=\dot{x}$.
\end{proof}

To summarize we can place the following three conditions on a Finsler connections
\[
V^a_{bc}(x,y) y^b=0, \ \Rightarrow \ N^a_i(x,y)=H^a_{bi}(x,y) y^b, \ \Rightarrow  \ N^a_i(x,y) y^i=H^a_{bi} y^a y^i.
\]
The middle one is called regularity but, as we shall see, most interesting Finsler connection satisfy already the first one which is equivalent to
\begin{equation} \label{hzp}
\nabla_v L= \nu (v).
\end{equation}

We can already introduce a very interesting Finsler connection.

\begin{definition} \label{clp}
The {\em canonical Finsler connection} of a non-linear connection, or the {\em Berwald connection}, is determined by the local triple
\[
(N^a_i, N^a_{bi},0).
\]
Since every spray induces a canonical non-linear connection (Theorem \ref{vkk}) we can also define a  {\em canonical Finsler connection} of a spray  as follows
\[
(G^i_k, G^i_{jk}, 0), \quad \textrm{where } G^i_k=\frac{\p G^i}{\p y^k} \textrm{ and } G^i_{jk}=\frac{\p^2 G^i}{\p y^j\p y^k}.
\]
\end{definition}
This definition is well posed as it is independent of the coordinate system. By positive homogeneity this Finsler connection is regular and satisfies (\ref{pfv}).
 It is remarkable  that with just the notion of `geodesic' we can define a covariant differentiation over the `configuration space' $TM$.

\subsubsection{Pullpack of the linear connection}

Let us mention an approach which reduces a Finsler connection to a more familiar (Koszul) linear connection on the linear bundle $\pi_M \colon TM\to M$.

In the previous section we have introduced a linear connection on the bundle $\pi_E\colon VE \to E$, $E=TM\backslash 0$. Given a section $s\colon M\to E$ the pullback bundle $s^* (VE)$ is the usual tangent bundle $\pi_M\colon TM\to M$. On it we can introduce a pullback linear connection $\overset{s}{\nabla}:=s^*\nabla$, defined as usual as follows. Given  sections $X,Y\colon M\to TM$, we calculate $s_*(X)\in TE$, introduce the section of $\pi_E\colon VE \to E$ defined just over $s(M)$ given by $\tilde Y(s(p))=(s(p), Y(p))$, $p\in M$, where we used the identification $VE=E\times_M E$, we arbitrarily extend $\tilde Y(x,y)$ in a neighborhood of $s(M)$ so that $\tilde Y(x,s(x))=(s(x),Y(x))$, and finally  we set
\[
\overset{s}{\nabla}_X Y:=s^*(\nabla_{s_*(X)}  \tilde Y)=s^*(\nabla^V_{D_Xs} \tilde Y+\nabla^H_X \tilde Y).
\]
In components this covariant derivative reads
\[
\dd x^k(\overset{s}{\nabla}_{\p/\p x^i} \frac{\p}{\p x^j})(x)= V^k_{ja}(x,s(x))[\frac{\p s^a}{\p x^i}+N^a_i(x,s(x))]+H^{k}_{ji}(x,s(x)).
\]
If $Y=s$ then the extension $\tilde{Y}$ can be chosen to be $L$ thus $\overset{s}{\nabla}_X s=s^*(D_X s)=s^*(\nabla^H_X s)$ by Eq.\ (\ref{hzp}). If $X=s$ is geodesic then $\overset{s}{\nabla}_s Y=s^*(\nabla^H_s \tilde Y)$.

A vector field $Y$ defined along a curve $x\colon I \to M$ is parallely transported iff $\overset{s}{\nabla}_{\dot{x}} Y=0$ which in coordinates reads
\[
\frac{\dd Y}{\dd t}+ \left\{V^k_{ja}(x,s(x))[\frac{\p s^a}{\p x^i}+N^a_i(x,s(x))]+H^{k}_{ji}(x,s(x)) \right\}\dot{x}^i Y^j=0.
\]
If $\nabla$ is the canonical Finsler connection of a spray then
\[
\overset{\dot{x}}{\nabla}_{\dot{x}}Y=\overset{flip}{D}_{\!\!\dot{x}} Y .
\]
For what concerns the curvature $\overset{s}{R}$ of $\overset{s}{\nabla}$, it is just the pullback of the curvature of the linear connection $\nabla$ which will be calculated in Sect.\ \ref{cuh}.

\subsection{The Finsler Lagrangian and its spray}

Let $\mathscr{L}\colon TM\backslash 0 \to \mathbb{R}$ be a function such that
\begin{equation} \label{mod}
\mathscr{L}(x,sy)=s^2 \mathscr{L}(x,y),
\end{equation}
and such that
\[
g_{ij}(x,y)=\frac{\p^2 \mathscr{L}}{\p y^i \p y^j}
\]
is non-singular.\footnote{Many authors use to work with $F=\sqrt{2\mathscr{L}}$, thus excluding without special reasons the  non-positive definite case.} The pair $(M,\mathscr{L})$ is called {\em pseudo-Finsler space}, and $\mathscr{L}$ is called {\em Finsler Lagrangian}.
Through the matrix $g$ we define the (Finsler) metric
\[
g\colon TM\backslash 0\to T^*M\otimes T^*M, \qquad g=g_{ij}(x,y) \, \dd x^i\otimes \dd x^j.
\]
The space is {\em Finsler} if $g$ is positive definite. The function $\mathscr{L}$ will be considered sufficiently differentiable for our purposes, typically $C^5$ would be fine. It can be shown using positive homogeneity that $\mathcal{L}$ is extendible as a $C^1$ function on the zero section. It can be extended as a $C^2$ function if and only if it is quadratic in $y$ (pseudo-Riemannian case).

The Euler-Lagrange equations for the action $\int \mathscr{L}(x,\dot{x})\,\dd t$ with fixed extremes are
\[
0=\frac{\dd}{\dd t} \frac{\p \mathscr{L}}{\p y^j}- \frac{\p\mathscr{L} }{\p x^j }=\frac{\p^2\mathscr{L} }{\p y^i \p y^j} \, \dot{y}^i+\frac{\p^2\mathscr{L} }{\p x^i \p y^j} \,y^i- \frac{\p\mathscr{L} }{\p x^j }, \qquad y^i=\dot x^i,
\]
which can be rewritten in the form
\[
\ddot x^i+2\mathcal{G}^i(x,\dot{x})=0,
\]
where
\begin{align}
2 \mathcal{G}^i(x,y)&= g^{is}\left( \frac{\p^2\mathscr{L} }{\p x^k \p y^s} \,y^k -\frac{\p\mathscr{L} }{\p x^s } \right) \label{axo}\\
&=\frac{1}{2}\, g^{is} \left( \frac{\p}{\p x^j} \,g_{sk}+\frac{\p}{\p x^k} \, g_{sj}-\frac{\p}{\p x^s}\, g_{jk}\right) y^j y^k , \label{axu}
\end{align}
and $g^{is}$ is the inverse metric $g^{is} g_{sk}=\delta^i_k$.  The coefficients $\mathcal{G}^i$ of Eq.\ (\ref{axu}) define a spray, this can be verified checking the transformation rule, though it should be clear from the variational origin of these second order equations that their definition is independent of the chart.

 The metric $g$ is positive homogeneous of degree zero in $y$ thus
\begin{equation} \label{pfb}
y^s\frac{\p}{\p y^s} \,g_{jk}=0 ,
\end{equation}
and moreover, using Eq.\ (\ref{mod})
\[
\mathscr{L}=\frac{1}{2} g_{ij}(x,y) y^i y^j, \qquad   g_{ij}(x,y) y^i= \frac{\p \mathscr{L}}{\p y^j} .
\]
If  $g$ is positive definite then $\mathscr{L}\ge 0$ and for each $x$, $F(x,y)=\sqrt{2 \mathscr{L}(x,y)}$ provides a norm for $T_xM$. In this case the pseudo-Finsler space is simply called {\em Finsler} space.

\subsection{The Finsler Lagrangian and its non-linear connection}

Non-linear connections which preserve the Finsler Lagrangian have a spray determined by the torsion.

\begin{proposition} \label{mmf}
Every (non-linear) connection $\mathcal{N}$ positive homogeneous of degree one (and with torsion $\tau$) such that
 $\mathscr{L}$ vanishes on the horizontal vectors of $\mathcal{N}$ is such that
\begin{align}
G^i(x,y)&=\mathcal{G}^i(x,y)+\frac{1}{2} g^{ij} \tau_{jk}^l y^k g_{ls} y^s , \label{bhh}\\
\frac{\p^2 N^i_k}{\p y^l\p y^j} \,g_{im} y^m&=\frac{\delta g_{jl}}{\delta x^k}-g_{ji}N^i_{lk}-g_{il}N^i_{jk}=\nabla^{H\mathcal{N}}_{k} g_{jl}. \label{vlc}
\end{align}
where $G^i$ is the spray of $\mathcal{N}$.
Conversely, if the last equation  is satisfied then $\mathscr{L}$ vanishes on the horizontal vectors of $\mathcal{N}$ (thus (\ref{bhh}) holds as well).
\end{proposition}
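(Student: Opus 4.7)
\emph{Proof plan.} The hypothesis that $\mathscr{L}$ vanishes on horizontal vectors of $\mathcal{N}$ is the single scalar identity $\delta\mathscr{L}/\delta x^k=0$, equivalently
\[
\frac{\p \mathscr{L}}{\p x^k}=N^a_k\,\ell_a,\qquad \ell_a:=\frac{\p \mathscr{L}}{\p y^a}=g_{ab}\,y^b.
\]
Both identities (\ref{vlc}) and (\ref{bhh}) will be obtained by differentiating this relation with respect to $y$ and invoking the positive homogeneity of $N$, while the converse will be recovered by contracting (\ref{vlc}) with $y^jy^l$.

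To prove (\ref{vlc}) I differentiate the hypothesis twice in $y$, first in $y^j$ and then in $y^l$. Two homogeneity identities enter: $\p\ell_a/\p y^l=g_{al}$ (a consequence of the total symmetry of $\p g_{ab}/\p y^c=\p^3\mathscr{L}/\p y^a\p y^b\p y^c$ combined with degree-zero homogeneity of $g$, which give $y^b\,\p g_{ab}/\p y^l=0$), and the same symmetry used to rewrite $N^a_k\,\p g_{ja}/\p y^l$ as $N^a_k\,\p g_{jl}/\p y^a$, so that this term combines with $\p g_{jl}/\p x^k$ into $\delta g_{jl}/\delta x^k$. Since the top derivative $\p^3\mathscr{L}/\p y^l\p y^j\p x^k$ equals $\p g_{jl}/\p x^k$, equation (\ref{vlc}) falls out directly. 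For (\ref{bhh}) I contract the once-differentiated form with $y^j$ and apply the Euler identities $N^a_{jk}y^j=N^a_k$ together with $N^a_{mk}y^k=2G^a_m-N^a_m$ (obtained by differentiating $N^a_ky^k=2G^a$ in $y^m$). Plugging the resulting expression for $\p g_{ij}/\p x^k\,y^jy^k$ into the Euler--Lagrange formula (\ref{axu}), rewritten as $2g_{is}\mathcal{G}^s=(\p g_{ij}/\p x^k)\,y^jy^k-N^a_i\,\ell_a$ by virtue of the hypothesis, a routine rearrangement yields
\[
g_{is}(G^s-\mathcal{G}^s)=(N^a_i-G^a_i)\,\ell_a=T^a_i\,\ell_a;
\]
the identity $T^a_i=-\tfrac12\tau^a_{ji}y^j$ from (\ref{bln}), together with the antisymmetry $\tau^a_{ji}=-\tau^a_{ij}$, then converts this into (\ref{bhh}) after raising the index with $g^{im}$.

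For the converse, contract (\ref{vlc}) with $y^jy^l$. The left-hand side vanishes, because degree-one homogeneity of $N^i_k$ implies $y^l\,\p^2 N^i_k/\p y^j\p y^l=0$. On the right, the identity $y^jy^l\,\p g_{jl}/\p y^a=0$ (obtained by differentiating $2\mathscr{L}=g_{jl}y^jy^l$ in $y^a$ and using $\ell_a=g_{ab}y^b$) gives $y^jy^l\,\delta g_{jl}/\delta x^k=y^jy^l\,\p g_{jl}/\p x^k=2\,\p\mathscr{L}/\p x^k$, while the two connection terms combine via $N^i_{lk}y^l=N^i_k$ into $-2\ell_iN^i_k$. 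Equating recovers $\p\mathscr{L}/\p x^k=\ell_iN^i_k$, namely the horizontal hypothesis, whence (\ref{bhh}) follows from the forward direction. The only real bookkeeping hurdle is keeping track of the various $y$-derivatives of $g$ and of the Cartan-type symmetries they satisfy; once these symmetries are systematically invoked alongside the positive homogeneity of $N$, the computation is essentially forced.
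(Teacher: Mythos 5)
Your proposal is correct and follows essentially the same route as the paper: differentiate the horizontality condition $\delta\mathscr{L}/\delta x^k=0$ once and twice in $y$, invoke Euler's identities for the homogeneous quantities, and recover the converse by contracting (\ref{vlc}) back with $y^jy^l$; the only cosmetic difference is that you extract (\ref{bhh}) via the tensor $T^i_{\,j}=N^i_j-\p G^i/\p y^j$ and the earlier identity $T^i_{\,j}=-\tfrac12\tau^i_{kj}y^k$ from (\ref{bln}), where the paper substitutes the torsion directly inside the contracted term. One slip to fix: for (\ref{bhh}) you must contract the once-differentiated identity with $y^k$ (the base index), not $y^j$ — the literal $y^j$ contraction collapses, via $N^a_{jk}y^j=N^a_k$ and $y^j\p_{y^j}\p_{x^k}\mathscr{L}=2\p_{x^k}\mathscr{L}$, back to the hypothesis and yields nothing, whereas the $y^k$ contraction is the one that produces the Euler identity $N^a_{mk}y^k=2G^a_m-N^a_m$ you actually use and the correct conclusion $g_{is}(G^s-\mathcal{G}^s)=T^a_{\,i}\,\ell_a$.
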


\begin{proof}
Suppose that $\mathscr{L}$ vanishes on the horizontal vectors of $\mathcal{N}$, then
\begin{equation} \label{aaf}
0=\frac{\p}{\p y^j} \frac{\delta \mathscr{L}}{\delta x^k} = \frac{\p}{\p y^j} \frac{\p \mathscr{L}}{\p x^k}-N^i_k \frac{\p^2 \mathscr{L}}{\p y^j \p y^i}-\frac{\p N^i_k}{\p y^j} \frac{\p \mathscr{L}}{\p y^i}.
\end{equation}
Contracting with $y^k$ and using the definition of torsion in the last term
\[
0=y^k\frac{\p}{\p y^j} \frac{\p \mathscr{L}}{\p x^k}-y^kN^i_k \frac{\p^2 \mathscr{L}}{\p y^j \p y^i}-y^k\frac{\p N^i_j}{\p y^k} \frac{\p \mathscr{L}}{\p y^i}+y^k\tau^i_{jk} \frac{\p \mathscr{L}}{\p y^i}.
\]
Using $y^k N^i_k=2G^i$ and the positive homogeneity of $N^i_j$
\[
0=y^k\frac{\p}{\p y^j} \frac{\p \mathscr{L}}{\p x^k}-2G^i \frac{\p^2 \mathscr{L}}{\p y^j \p y^i}- N^i_j \frac{\p \mathscr{L}}{\p y^i}+y^k\tau^i_{jk} \frac{\p \mathscr{L}}{\p y^i}.
\]
Using $\frac{\delta}{\delta x^j} \mathscr{L}$ to rewrite the penultimate term, and the definition of $\mathcal{G}^i$ given by (\ref{axo}) we arrive at Eq.\ (\ref{bhh}).

Let us differentiate (\ref{aaf}) with respect to $y^l$.
We obtain rearranging terms
\[
\frac{\delta g_{jl}}{\delta x^k}-g_{ji}N^i_{lk}-g_{il}N^i_{jk}=\frac{\p^2 N^i_k}{\p y^l\p y^j}\, g_{im} y^m.
\]
Since all the terms in (\ref{aaf}) are positive homogeneous of degree one it is possible to reobtain (\ref{aaf}) from the last one through multiplication by $y^l$. And from equation (\ref{aaf}) it is possible to obtain ${\delta \mathscr{L}}/{\delta x^k}=0$ through multiplication by $y^j$, which concludes the proof.
\end{proof}

Since
every (non-linear) connection positive homogeneous of degree one is determined  by its spray  and by its torsion as in Eq.\ (\ref{bln}) we have

\begin{corollary} \label{mmg}
There is  one and only one torsionless non-linear connection positive homogeneous of degree one for which the Finsler Lagrangian $\mathscr{L}$ vanishes on the horizontal vectors, namely  (the Berwald non-linear connection) $\mathcal{G}^i_k$.
\end{corollary}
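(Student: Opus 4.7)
\emph{Proof plan.} The strategy is to read the corollary as a fixed-point statement: among positive homogeneous degree one non-linear connections, Proposition \ref{mmf} restricts both the spray and the full coefficients of any connection satisfying the stated properties, so essentially no freedom is left once one sets $\tau=0$.

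\emph{Uniqueness.} Suppose $\mathcal{N}$ is torsionless, positive homogeneous of degree one, and that $\mathscr{L}$ vanishes on its horizontal vectors. Then the spray identity (\ref{bhh}) of Proposition \ref{mmf} collapses to $G^i=\mathcal{G}^i$. Since a positive homogeneous degree one non-linear connection is parametrized by its spray and its torsion via (\ref{bln}), substituting $\tau=0$ and $G^i=\mathcal{G}^i$ gives
\[
N^i_k=\frac{\p \mathcal{G}^i}{\p y^k}=\mathcal{G}^i_k,
\]
which is exactly the Berwald non-linear connection of Definition \ref{clp}.

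\emph{Existence.} Conversely, set $N^i_k:=\mathcal{G}^i_k$. The three defining properties are then immediate or reduce to a single identity: positive homogeneity of degree one for $N$ follows from positive homogeneity of degree two of $\mathcal{G}^i$; vanishing torsion follows from $\mathcal{G}^i_{kj}=\mathcal{G}^i_{jk}$ by Schwarz's theorem; and the associated spray is $\tfrac12 N^i_k y^k=\tfrac12\mathcal{G}^i_k y^k=\mathcal{G}^i$ by Euler's relation. It remains to check that $\mathscr{L}$ vanishes on horizontal vectors of this $\mathcal{N}$, i.e.\ $\delta\mathscr{L}/\delta x^k=\p\mathscr{L}/\p x^k-\mathcal{G}^i_k\,g_{ij}y^j=0$. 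By the converse direction of Proposition \ref{mmf}, it is enough to verify Eq.\ (\ref{vlc}) with $N^i_k=\mathcal{G}^i_k$; this reduces, after differentiating twice in $y$ the defining identity (\ref{axo}) and exploiting the annihilation $y^i\p g_{si}/\p y^j=0$ (equivalently, that the Cartan tensor is $y$-orthogonal), to a straightforward index manipulation.

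\emph{Where the work sits.} The non-routine step is verifying the horizontal constancy of $\mathscr{L}$ for the Berwald connection; with Proposition \ref{mmf} already in hand it is a short calculation, but it is the only place where the specific form (\ref{axu}) of $\mathcal{G}^i$ and the degree-two homogeneity of $\mathscr{L}$ are genuinely used. Every other ingredient of the corollary is a formal consequence of the spray/torsion parametrization (\ref{bln}) already established for the non-linear connection.
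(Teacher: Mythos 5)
Your argument is correct and coincides with the paper's: uniqueness is exactly the paper's one-line deduction (Proposition \ref{mmf} with $\tau=0$ forces $G^i=\mathcal{G}^i$, and then Eq.\ (\ref{bln}) forces $N^i_k=\mathcal{G}^i_k$), while the existence check that you defer to an index manipulation is the identity the paper records as Eq.\ (\ref{nnn}) and otherwise leaves implicit. The only stylistic remark is that, instead of verifying the twice-$y$-differentiated condition (\ref{vlc}), it is slightly quicker to check $\delta\mathscr{L}/\delta x^k=\p_{x^k}\mathscr{L}-\mathcal{G}^i_k\,g_{ij}y^j=0$ directly by differentiating the contraction $2\mathcal{G}^i g_{ij}y^j=y^k\p_{x^k}\mathscr{L}$ (which follows from (\ref{axo})) once with respect to $y^k$.
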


\subsection{Notable Finsler connections}

Every Finsler connection can be locally represented through a triple of coefficients \[
(N^a_i, H^a_{bi}, V^a_{bc})
\]
with appropriate transformation properties under change of coordinates.

Given a Finsler Lagrangian $\mathscr{L}$, we have its spray determined by $\mathcal{G}^i$, and hence the associated canonical Finsler connection (Berwald's) introduced in Def.\ \ref{clp}
\[
(\mathcal{G}^i_k, \mathcal{G}^i_{jk}, 0)\quad \mathcal{G}^i_k=\frac{\p \mathcal{G}^i}{\p y^k}, \quad \mathcal{G}^i_{jk}=\frac{\p^2 \mathcal{G}^i}{\p y^j\p y^k} .
\]
We can determine further Finsler connections imposing the vanishing of the horizontal or vertical covariant derivatives of the metric $g_{ij}$. Through a standard calculation it is easy to see,  imposing Eq.\ (\ref{vlf}) or (\ref{vlg}) and the vanishing of the horizontal and vertical torsions,
\begin{align}
&&\nabla^H g=0, \quad H^i_{jk}=H^{i}_{kj}   \ \Leftrightarrow \ \ H^i_{jk}=\Gamma_{jk}^{ i}:=&\frac{1}{2} g^{is} \left( \frac{\delta}{\delta x^j} \,g_{sk}+\frac{\delta}{\delta x^k} \, g_{sj}-\frac{\delta}{\delta x^s}\, g_{jk}\right) ,\label{mqf}\\
&&\nabla^V g=0, \quad V^i_{jk}=V^{i}_{kj}    \ \Leftrightarrow \ \ V^i_{jk}=C_{jk}^{ i}:=&\frac{1}{2} g^{is} \left( \frac{\p}{\p y^j} \,g_{sk}+\frac{\p}{\p y^k} \, g_{sj}-\frac{\p}{\p y^s}\, g_{jk}\right) \nonumber\\
&{}& {} =&\frac{1}{2} g^{is} \frac{\p}{\p y^j} \,g_{sk} \nonumber
\end{align}
The tensor $C_{ijk}:=g_{is} {C}_{jk}^{s}$ is  symmetric and is called {\em Cartan torsion}.

The most used Finsler connections correspond to the four most noteworthy combinations
\begin{align*}
\textrm{Berwald:}& \quad  (\mathcal{G}^i_k, \mathcal{G}^i_{jk}, 0), \\
\textrm{Cartan:}&  \quad (\mathcal{G}^i_k, \Gamma_{jk}^{i}, {C}_{jk}^{i}), \\
\textrm{ Chern-Rund:}& \quad (\mathcal{G}^i_k,  \Gamma_{jk}^{i}, 0), \\
\textrm{Hashiguchi:}&  \quad (\mathcal{G}^i_k, \mathcal{G}^i_{jk},  {C}_{jk}^{i}).
\end{align*}
The corresponding covariant derivatives will be denoted with $\nabla^{HB}$, $\nabla^{VB}$, $\nabla^{HC}$, $\nabla^{VC}$, the other possibilities being coincident with one of them. By construction
\[
\nabla^{HC} g_{ij}= \nabla^{VC} g_{ij}=0.
\]
An easy computation shows that $\nabla^{HC} g^{ij}=\nabla^{VC} g^{ij}=0$. Moreover,
\[
\nabla^{VB}_i g_{jk}= \frac{\p}{\p y^i} g_{jk}=2C_{ijk}.
\]
The derivative $\nabla^{HB} g_{ij}$ has been calculated in Eq.\ (\ref{vlc}) (see Eq.\ (\ref{nnn})).

Unless otherwise specified we assume that the non-linear connection is Berwald's.

\subsection{Derivatives of the metric and Landsberg tensor}

It is well known that in Riemannian geometry the first derivatives of the metric can be obtained from the Levi-Civita connection coefficients (Christoffel symbols) and hence from the spray. One might ask whether a similar formula holds in Finsler geometry.

Differentiating $g_{si} \mathcal{G}^s$ as given in (\ref{axo}) with respect to $y^j$, we obtain
\begin{equation} \label{bbe}
4C_{sij} \mathcal{G}^s+2g_{si} \mathcal{G}^s_j=y^m\frac{\p^3 \mathscr{L}}{\p x^m \p y^i \p y^j}+\frac{\p^2 \mathscr{L}}{\p x^j \p y^i }-\frac{\p^2 \mathscr{L}}{\p x^i \p y^j}.
\end{equation}
Summing this equation with the same equation with $i$ and $j$ exchanged gives
\begin{equation} \label{use}
y^m\frac{\p}{\p x^m} \,g_{ij}=4C_{sij} \mathcal{G}^s+ g_{si} \mathcal{G}^s_j+  g_{sj}  \mathcal{G}^s_i.
\end{equation}
Differentiating with respect to $y^k$
\begin{align}
\frac{\p}{\p x^k} \,g_{ij}&=-2y^m \frac{\p}{\p x^m} \,C_{ijk}+ 4C_{sijk} \mathcal{G}^s+ 4C_{sij } \mathcal{G}^s_k + 2C_{sik}  \mathcal{G}^s_j \nonumber \\
&\quad \qquad+g_{si} \mathcal{G}^s_{jk}+ 2C_{sjk} \mathcal{G}^s_i+ g_{sj} \mathcal{G}^s_{ik}, \label{xxo}
\end{align}
where
\[
C_{sijk}=\frac{\p}{\p y^k}\, C_{sij}.
\]
This equation gives the first derivatives of the metric tensor with respect to the horizontal variables and can be useful in some computations made in coordinates.

Using positive homogeneity Equation (\ref{xxo}) can be rearranged as follows
\begin{equation}
\nabla^{HB}_i g_{jk}= -2 y^l \nabla^{HB}_l C_{ijk}.
\end{equation}
By Prop.\ \ref{mmf} and Cor.\ \ref{mmg}
\begin{equation} \label{nnn}
\nabla^{HB}_{i} g_{jk}=\mathcal{G}^l_{ijk} g_{lm} y^m.
\end{equation}
These equations are well known \cite{okada82}.
The {\em Landsberg} tensor is the  map $L\colon E \to T^*M\otimes_M T^*M\otimes_M T^*M$, defined by
\begin{equation} \label{mkl}
 \quad L_{ijk}:=-\frac{1}{2}\,\mathcal{G}^l_{ijk} g_{lm} y^m=y^l \nabla^{HB}_l C_{ijk}=-\frac{1}{2} \nabla^{HB}_{i} g_{jk} .
\end{equation}
This definition clarifies that it is symmetric and
\begin{equation} \label{alk}
L_{ijk} y^k=0.
\end{equation}
Subtracting $\nabla^{HC}_i g_{jk}=0$ from Eq.\ (\ref{nnn}) we obtain
\[
-g_{jl} (\mathcal{G}^l_{ki}-\Gamma^l_{ki})-g_{lk} (\mathcal{G}^l_{ji}-\Gamma^l_{ji})=-2 L_{ijk}
\]
since the ${\delta}/{\delta x^i}g_{jk}$ terms cancel out. Now observe that the second term on the left-hand side and the right-hand side are symmetric under exchange of $i$ and $j$, and so also the first term on the left-hand side shares this symmetry. Since it is already symmetric under exchange of $i$ and $k$ it must be totally symmetric and so the two terms on the left-hand side are equal. As a consequence, we obtain another expression of the Landsberg tensor as difference of two connections
\begin{equation} \label{lan}
L_{ijk}=g_{il} (\mathcal{G}^l_{jk}-\Gamma^l_{jk}).
\end{equation}
From here it is immediate that
\[
\Gamma^l_{ki} y^k=\mathcal{G}^l_{ki} y^k=\mathcal{G}^l_{i},
\]
thus all the notable Finsler connections we mentioned are regular. Also from Eq.\ (\ref{lan}) and (\ref{alk}) we have that $y^l\nabla_l^{HC}=y^l\nabla_l^{HB}$ so we can rewrite Eq.\ (\ref{mkl}) as
\begin{equation} \label{llp}
L_{ijk}=y^l \nabla^{HC}_l C_{ijk}.
\end{equation}
Finally, another way of obtaining the Landsberg tensor is as follows. From Eq.\ (\ref{lan}) using $y^j \mathcal{G}^l_{ijk}=0$
\[
y^j \frac{\p}{\p y^i} \,\Gamma^l_{jk}=y^j \mathcal{G}^l_{ijk}-y^j \frac{\p}{\p y^i} \, L^l_{jk}=-\frac{\p}{\p y^i}\,(y^j L^l_{jk})+L^l_{ik}=L^l_{ik}.
\]
In particular if $\Gamma^l_{jk}$ does not depend on $y$, then the Landsberg tensor vanishes and $\mathcal{G}^l_{jk}(=\Gamma^l_{jk})$ does not depend on $y$. This property characterizes the {\em Berwald spaces}, see next sections.


Let us define  the tensors
\[
I_i:= C_{ijk} g^{jk}, \quad J_i:= L_{ijk} g^{jk}, \quad E_{ij}:=\frac{1}{2}\,\mathcal{G}^l_{ijl}
\]
called respectively {\em mean Cartan torsion},  {\em mean Landsberg curvature} and {\em mean  Berwald curvature}. From Eq.\ (\ref{llp})
\begin{equation} \label{mkv}
J_i=y^l \nabla^{HC}_l I_i.
\end{equation}
By Jacobi's formula for the derivative of a determinant
\begin{equation} \label{mnk}
I_i(x,y)= \frac{\p}{\p y^i} \left(\ln \sqrt{\vert \det g_{jk}(x,y)\vert}\right).
\end{equation}
Deicke's theorem  establishes that
a Minkowski norm is Euclidean if and only if $I_i=0$ (a short proof can be obtained joining the first part of the proof in \cite{bao00} with the last part in \cite{mo06}). As a consequence,  every Finsler space for which  $I_i=0$  is Riemannian, namely $g_{jk}$ does not depend on $y$.

\subsubsection{A characterization of Berwald's spaces}
Applying the operator $\p^2/\p y^k \p y^l$ to both sides of Eq.\ (\ref{bbe}) we obtain
\begin{align}
g_{is} \mathcal{G}^s_{jkl}&=y^m\nabla_m^{HB} C_{ijkl}-\nabla_i^{HB} C_{jkl}+ \nabla_j^{HB} C_{ikl}+\nabla_k^{HB} C_{jil}+\nabla_l^{HB} C_{jki}. \label{mlc}
\end{align}
Summing Eq.\ (\ref{mlc}) with the equations obtained exchanging $i$ with $j$, and with $k$, and subtracting that obtained exchanging $i$ with $l$, leads to
\[
g_{is} \mathcal{G}^s_{jkl}+g_{js} \mathcal{G}^s_{ikl}+g_{ks} \mathcal{G}^s_{jil}-g_{ls} \mathcal{G}^s_{jki}=2 y^m\nabla_m^{HB} C_{ijkl}+4\nabla^{HB}_l C_{ijk}.
\]
Expanding  the covariant derivative in terms of the Berwald coefficients it is easy to verify the following identity
\begin{align}
\frac{\p}{\p y^i} L_{jkl}&=\frac{\p}{\p y^i}(y^m \nabla^{HB}_m C_{jkl})=\nabla^{HB}_i C_{jkl}+ y^m\nabla_m^{HB} C_{ijkl}. \label{csc}
\end{align}
Thus, recalling (\ref{mkl}), we arrive at the two equations
\begin{align}
g_{is} \mathcal{G}^s_{jkl}=\frac{\p}{\p y^i}&(y^m \nabla^{HB}_m C_{jkl})-2\nabla_i^{HB} C_{jkl}+ \nabla_j^{HB} C_{ikl}+\nabla_k^{HB} C_{jil}+\nabla_l^{HB} C_{jki}, \label{aal}\\
2\nabla^{HB}_l C_{ijk}&=\frac{\p}{\p y^l}(y^m g_{ms}\mathcal{G}^s_{ijk})+g_{is} \mathcal{G}^s_{jkl}+g_{js} \mathcal{G}^s_{ikl}+g_{ks} \mathcal{G}^s_{jil}-g_{ls} \mathcal{G}^s_{ijk}.
\end{align}
They prove the equivalence  $\mathcal{G}^s_{ikl}=0 \Leftrightarrow \nabla_i^{HB} C_{jkl}=0$, which provides another characterization of Berwald's spaces (for a different, less direct proof see \cite{bao00}).

Another identity to keep in mind is the following. Using Eq.\ (\ref{csc}) we can rewrite (\ref{aal}) removing the covariant derivatives of the Cartan tensor as follows
\begin{equation} \label{jji}
g_{is} \mathcal{G}^s_{\, jkl}=-\frac{\p}{\p y^i} L_{jkl}+\frac{\p}{\p y^j} L_{ikl}+\frac{\p}{\p y^k} L_{jil}+\frac{\p}{\p y^l} L_{jki}-y^m \nabla_m^{HB} C_{ijkl}.
\end{equation}
This equation  tells us that if  $L=0$ (Landsberg space) then  the left-hand side (Berwald curvature) is totally symmetric.

%
%
%

\subsection{Some terminology}
We have all the ingredients to introduce some terminology.
\begin{itemize}
\item[] A {\em pseudo-Finsler space}  is a manifold $M$ over which some Finsler Lagrangian $\mathscr{L}$ has been assigned.
\item[] A local {\em pseudo-Minkowski space}, is a pseudo-Finsler space such that local coordinates exist for which $\mathscr{L}$ is independent of $x$. A similar definition is obtained dropping ``local''. A pseudo-Minkowski space is said Minkowski if the Finsler metric is positive definite (they are finite dimensional Banach spaces with a norm induced by a  strongly convex unit ball).
\item[] A {\em pseudo-Riemannian space} is a pseudo-Finsler space for which  the following equivalent conditions hold
    \begin{itemize}
    \item[(a)] the metric $g$ does not depend on  $y$,
    \item[(b)] the Finsler Lagrangian $\mathscr{L}$ is quadratic in $y$,
    \item[(c)] $C_{ijk}=0$.
    \end{itemize}
\item[] A {\em Berwald space} is a (pseudo-)Finsler space for which  the following equivalent conditions hold
    \begin{itemize}
    \item[(a)] the Berwald  curvature $\mathcal{G}^i_{jkl}$  vanishes,
    \item[(b)] the Berwald connection $\mathcal{G}^i_{jk}$ is independent of $y$,
    \item[(c)] the spray $\mathcal{G}^i$ is $C^2$ on the zero section,
    \item[(d)] $\Gamma^i_{jk}$ is independent of $y$,
    \item[(e)] $\nabla^{HB} C_{ijk}=0$,
    \item[(f)] $R^{VH}_{\textrm{Ber}}=0$.
    \end{itemize}
\item[] A {\em Landsberg space} is a (pseudo-)Finsler space for which the following equivalent conditions hold
\begin{itemize}
    \item[(a)] the Landsberg tensor vanishes: $L_{ijk}=0$ ,
    \item[(b)] the Berwald connection is metric: $\nabla^{HB} g=0$,
    \item[(c)] $y^l\,\nabla^{HB}_l C_{ijk}=0$,
    \item[(d)] $T^{VH}_{\textrm{ver,ChR}}=0$.
\end{itemize}
\item[] A {\em weakly pseudo-Riemannian space} is a pseudo-Finsler space for which $I_{i}=0$. In the positive definite case, this class coincides with that of the Riemannian spaces by Deicke's theorem.\footnote{This definition is new but appropriate given the definitions below. Incidentally, in Finsler geometry is would have been more appropriate to call the Cartan torsion: {\em Riemann curvature}. In this way a space is  pseudo-Riemannian iff the Riemann curvature vanishes.}
\item[] A {\em weakly Berwald space} is a (pseudo-)Finsler space for which $E_{ij}=0$.
\item[] A {\em weakly Landsberg space} is a (pseudo-)Finsler space for which $J_i=0$.
\end{itemize}

All the above equivalences have been already proved, with the exception of (f) in the definition of Berwald space, and (d) in the definition of Landsberg space, which we shall prove in Sections \ref{cuh} and \ref{cui}.

Every pseudo-Riemannian space is Berwald but the converse is not true, consider the pseudo-Minkowski spaces which are not pseudo-Riemannian. Every Berwald space is Landsberg but it is not known whether the converse is true. Every weakly pseudo-Riemannian space is weakly Landsberg by Eq.\ (\ref{mkv}). With Proposition \ref{aes} we shall prove that it is also weakly Berwald.

\begin{theorem}
A pseudo-Finsler space is locally pseudo-Minkowski if and only if $\mathcal{G}^i_{jkl}=0$ and $R^i_{jk}=0$.
\end{theorem}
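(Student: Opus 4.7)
The plan is to treat the two implications separately. The forward direction is essentially trivial: if coordinates exist in which $\mathscr{L}$ is $x$-independent, then $g_{ij}$ is $x$-independent, whence Eq.~(\ref{axu}) yields $\mathcal{G}^i\equiv 0$. Differentiating three times in $y$ gives $\mathcal{G}^i_{jkl}=0$, and the Berwald non-linear connection coefficients $N^i_k=\partial\mathcal{G}^i/\partial y^k$ vanish identically, so $R^i_{jk}=0$ follows from (\ref{cur}).

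For the converse the strategy is to use the Berwald structure (coming from $\mathcal{G}^i_{jkl}=0$) and flatness (coming from $R^i_{jk}=0$) to pass to an affine chart on $M$, and then invoke Corollary~\ref{mmg}. First, $\mathcal{G}^i_{jkl}=0$ combined with positive homogeneity gives $\mathcal{G}^i=\tfrac{1}{2}\mathcal{G}^i_{jk}(x)\,y^j y^k$ with $\mathcal{G}^i_{jk}$ symmetric in $j,k$ and independent of $y$; hence the Berwald non-linear connection $N^i_k(x,y)=\mathcal{G}^i_{jk}(x)\,y^j$ is linear in $y$ and coincides with a genuine torsion-free Koszul connection on $TM$. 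Second, a direct computation from (\ref{cur}) using this linearity in $y$ gives
\[
R^a_{ij}=y^l\bigl[\,\partial_i \mathcal{G}^a_{lj}-\partial_j \mathcal{G}^a_{li}+\mathcal{G}^b_{lj}\mathcal{G}^a_{bi}-\mathcal{G}^b_{li}\mathcal{G}^a_{bj}\,\bigr],
\]
i.e.\ $y$ contracted with the classical Riemann tensor of the connection $\mathcal{G}^i_{jk}$. Since the square-bracketed expression is $y$-independent and the vanishing of $R^a_{ij}$ must hold for every $y$, this linear connection on $M$ is flat.

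A torsion-free flat linear connection admits local coordinates $\tilde x$ in which its Christoffel symbols vanish identically: pick any local basis $\{e_a\}$ of parallel vector fields (available by flatness), observe that torsion-freeness gives $[e_a,e_b]=\nabla_{e_a}e_b-\nabla_{e_b}e_a=0$, and apply the commuting-flow theorem to obtain coordinates with $\partial/\partial\tilde x^a=e_a$; in such coordinates the connection coefficients are zero. In this affine chart $\tilde N^i_k\equiv 0$, so $\delta/\delta\tilde x^k=\partial/\partial\tilde x^k$. By Corollary~\ref{mmg}, $\mathscr{L}$ is constant along horizontal vectors of the Berwald non-linear connection, i.e.\ $\delta\mathscr{L}/\delta\tilde x^k=0$; in the affine chart this reads $\partial\mathscr{L}/\partial\tilde x^k=0$, so $\mathscr{L}$ is independent of $\tilde x$ and the space is locally pseudo-Minkowski. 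The only non-routine step is the production of the affine chart, but that is a classical fact about flat symmetric connections, not a Finslerian difficulty; the Finsler input reduces entirely to Corollary~\ref{mmg}.
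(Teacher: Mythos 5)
Your proposal is correct and follows essentially the same route as the paper: the forward direction by direct computation from the formula for $\mathcal{G}^i$, and the converse by observing that $\mathcal{G}^i_{jkl}=0$ linearizes the Berwald non-linear connection, that $R^i_{jk}=0$ makes this torsion-free linear connection flat, producing an affine chart (the paper builds the parallel holonomic frame explicitly, you cite the classical fact, but the construction is the same), and finally invoking $\delta\mathscr{L}/\delta x^k=0$ for the Berwald non-linear connection (Corollary~\ref{mmg}) to conclude $\p\mathscr{L}/\p\tilde x^k=0$ in that chart. No gaps.
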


\begin{proof}
If the space is locally pseudo-Minkowski then there are local coordinates such that $\mathscr{L}$ does not depend on $x$. From Eq.\ (\ref{axo}) we get  $\mathcal{G}^i=0$, thus the coefficients of the non-linear connection $\mathcal{G}^i_j$ vanish, and so $\mathcal{G}^i_{jkl}=0$ and $R^i_{jk}=0$.

Conversely, if $\mathcal{G}^i_{jkl}=0$ then the `Berwald non-linear connection' of coefficients $\mathcal{G}^i_j$ is actually linear. Furthermore, as its curvature $R^i_{jk}$ vanishes by hypothesis, the parallel transport induced by the covariant derivative $D$ is linear and independent of the path followed. As a consequence, starting from a base $\{e_j\}$ of some $T_xM$, we can construct fields denoted in the same way in a neighborhood of $x$. The Berwald non-linear connection has zero torsion and in the new base we have by construction $D_{e_i} e_j=0$, thus  $[e_i,e_j]=-T(e_i,e_j)+D_{e_i} e_j-D_{e_j} e_i=0$ so these fields are holonomic. In conclusion, we can find local coordinates in such a way that $D_{e_i} e_j=0=\mathcal{G}^k_{ji}e_k=0$, thus $\mathcal{G}^k_{i}=\mathcal{G}^k_{ij}y^j=0$. But the Berwald non-linear connection is such that $\frac{\delta}{\delta x^k} \mathscr{L}=(\frac{\p}{\p x^k}-\mathcal{G}^j_k \frac{\p}{\p y^j}) \mathscr{L}=0$ thus $\frac{\p}{\p x^k}\mathscr{L}=0$, which concludes the proof.
\end{proof}

\subsection{Covariant derivative of the volume form} \label{vol}
Let us consider the volume tensor
\[
\mu_{ij\cdots k}=\sqrt{\vert \det g_{lm}(x,y)\vert} \, \epsilon_{ij\cdots k}
\]
where $\epsilon$ is the totally antisymmetric symbol determined by the normalization, $\epsilon_{12,\cdots n}=1$. The horizontal derivative is
\[
\frac{1}{\sqrt{\vert \det g_{lm}\vert}} \nabla^{HC}_s\mu_{ij\cdots k}= [\frac{ \delta}{\delta x^s} \ln \sqrt{\vert \det g_{lm}\vert} ]\,\epsilon_{ij\cdots k}-\Gamma^r_{is} \epsilon_{rj\cdots k}-\Gamma^r_{js} \epsilon_{ir\cdots k}+\cdots
\]
Since the left-hand side is antisymmetric in $ij\cdots k$ we can antisymmetrize the right-hand side with respect to the same indices, obtaining
\[
\nabla^{HC}_s\mu_{ij\cdots k}=\{ \frac{ \delta}{\delta x^s} \ln \sqrt{\vert \det g_{lm}\vert}  -\Gamma^r_{r s}\} \, \mu_{ij\cdots k}
\]
But from the  definition of $\Gamma$ we obtain $g_{is} \Gamma^s_{kj}+g_{k s} \Gamma^s_{ij}=\frac{\delta}{\delta x^j} g_{ik}$ which implies under contraction with $g^{ik}$ and using again Jacobi's formula for the derivative of a determinant
\begin{equation} \label{lll}
\Gamma^l_{li}=\frac{1}{2} g^{jk} \frac{\delta}{\delta x^i} g_{jk}=\frac{1}{2} g^{jk} \frac{\p}{\p x^i} g_{jk}-\frac{1}{2} g^{jk} \mathcal{G}_{i}^l \frac{\p}{\p y^l} g_{jk} =\frac{ \delta}{\delta x^i} \ln \sqrt{\vert \det g_{lm}\vert} .
\end{equation}
Thus  \[\nabla^{HC} \mu=0.\]
A completely analogous computation shows that $\nabla^{VC} \mu=0$.
Repeating the same steps for the Berwald connection does not lead to the same result since from Eq.\ (\ref{lan}) and (\ref{lll})
\[
\mathcal{G}^l_{li}=\frac{ \delta}{\delta x^i} \ln \sqrt{\vert \det g_{lm}\vert}+J_i.
\]
Thus we arrive at
\[
\nabla^{HB}_s\mu_{ij\cdots k}=-J_s  \, \mu_{ij\cdots k}.
\]
Similarly, we obtain
\[
\nabla^{VB}_s\mu_{ij\cdots k}=I_s  \, \mu_{ij\cdots k}.
\]

As a consequence:

\begin{theorem}
The weakly (pseudo-)Riemannian spaces are those \mbox{(pseudo-)} Finsler spaces for which the volume form induced by the metric is covariantly constant with respect to the vertical Berwald covariant derivative. By Eq. (\ref{mnk}) they are also those spaces for which the volume form is independent of $y$.

The weakly (pseudo-)Landsberg spaces are those (pseudo-)Finsler spaces for which the volume form induced by the metric is  covariantly constant with respect to the horizontal Berwald covariant derivative.
\end{theorem}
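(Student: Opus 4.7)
The proof will be an immediate consequence of the two identities derived just above the theorem statement, namely
\[
\nabla^{HB}_s\mu_{ij\cdots k}=-J_s\,\mu_{ij\cdots k},\qquad \nabla^{VB}_s\mu_{ij\cdots k}=I_s\,\mu_{ij\cdots k}.
\]
Since the volume form $\mu$ is nowhere zero (the metric is non-degenerate), each of these equations is an identity of the form ``covariant derivative $=$ (scalar $\times$ $\mu$)'', so the derivative vanishes if and only if the corresponding scalar vanishes. The plan is therefore to simply read off both equivalences from these two formulas.

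For the first assertion, I would start by recalling the definition ``weakly pseudo-Riemannian'' $\Leftrightarrow$ $I_i=0$. The vertical identity then gives $\nabla^{VB}\mu=0$ $\Leftrightarrow$ $I_s=0$ for all $s$, which is the stated characterization. For the supplementary remark, I would invoke Eq.\ (\ref{mnk}), which reads $I_i=\frac{\p}{\p y^i}\ln\sqrt{|\det g_{jk}|}$. Hence $I_i\equiv 0$ on the slit tangent bundle is equivalent to $\sqrt{|\det g_{jk}(x,y)|}$ being independent of $y$ on each fiber, and so to $\mu$ being independent of $y$.

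For the second assertion, I would proceed identically: by definition a weakly (pseudo-)Landsberg space is one for which $J_i=0$, and the horizontal identity gives $\nabla^{HB}\mu=0$ $\Leftrightarrow$ $J_s=0$, finishing the proof.

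There is no real obstacle here: everything was done in the calculations in Section~\ref{vol} leading up to the theorem. The only step worth double-checking is that $I_i\equiv 0$ on $TM\setminus 0$ genuinely forces $\det g_{jk}$ to be $y$-independent on each fiber (not merely constant on rays through $0$), but this is clear because $I_i=\p_{y^i}\ln\sqrt{|\det g_{jk}|}$ vanishing for all $i$ means the gradient in $y$ vanishes, so $\det g_{jk}$ is locally constant in $y$ and hence constant on each (connected) fiber of $TM\setminus 0$.
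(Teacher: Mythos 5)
Your proposal is correct and follows essentially the same route as the paper: the theorem is presented there as an immediate consequence of the identities $\nabla^{HB}_s\mu=-J_s\,\mu$ and $\nabla^{VB}_s\mu=I_s\,\mu$ derived in Section \ref{vol}, combined with the definitions $I_i=0$, $J_i=0$ and Eq.\ (\ref{mnk}). Your extra remark on connectedness of the punctured fibers, ensuring that $\p_{y^i}\ln\sqrt{\vert\det g_{jk}\vert}=0$ forces genuine $y$-independence, is a small but welcome precision that the paper leaves implicit.
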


\section{Some useful identities}

In this section we establish some useful identities for  the non-linear connection and the linear connections of Finsler geometry.

\subsection{The Bianchi identities for the non-linear connection}

Let us start with the second Bianchi identities since they do not require a soldering form.

\subsubsection{Second Bianchi identities}
The second Bianchi identities of a non-linear connection are
\begin{equation} \label{bin}
[\mathcal{N},R]=0,
\end{equation}
where $[,]$ is the Fr\"olicher-Nijenhuis bracket. They read in components
\[
0=(\mathcal{N}^\nu_\alpha \p_\nu R^\mu_{\beta \gamma} -R^\nu_{\alpha \beta} \p_\nu \mathcal{N}^\mu_{ \gamma} - \mathcal{N}^\mu_{ \nu} \p_\alpha R^\nu_{\beta \gamma}+2R^\mu_{\alpha \nu} \p_\beta \mathcal{N}^\nu_{ \gamma}) \,\frac{\p}{\p z^\mu}\otimes \dd z^\alpha \wedge \dd z^\beta \wedge \dd z^\gamma
\]
The third and fourth terms vanish identically, and so $\alpha, \beta, \gamma$ must be indices of base type and $\mu$ of fiber type.
\begin{equation} \label{bim}
0= (\p_i R^a_{jk}-N^b_i \frac{\p}{\p y^b} R^a_{jk} +R^b_{i j}  \frac{\p}{\p y^b}N^a_{k}) \,\frac{\p}{\p y^a}\otimes \dd x^i \wedge \dd x^j\wedge \dd x^k.
\end{equation}
Introduced the canonical Finsler connection $(N^a_i, N^a_{bi},0)$ of the non-linear connection, and its horizontal covariant derivative $\nabla^{HN}$, the previous equation  reads
\[
0= ( \nabla_i^{HN} R^a_{jk} +R^a_{b k}  N^b_{ji}+R^a_{jb} N^b_{ki}) \,\frac{\p}{\p y^a}\otimes \dd x^i \wedge \dd x^j\wedge \dd x^k
\]
\[
0= ( \nabla_i^{HN} R^a_{jk} +R^a_{b i}  \tau^b_{jk}) \,\frac{\p}{\p y^a}\otimes \dd x^i \wedge \dd x^j\wedge \dd x^k
\]
thus
\[
\nabla_{[i}^{HN} R^a_{jk]}+R^a_{b [i}  \tau^b_{jk]}=0 ,
\]
where $[\,]$ denotes antisymmetrization of the indices $i,j,k$.
%
Observe that Eq.\ (\ref{bin})-(\ref{bim}) depend just on the connection, and do not depend on a soldering form. Nevertheless, $\nabla^{HN}$ depends on the soldering form thus expressing the Bianchi identity in terms of this covariant derivative reintroduces the soldering form through its torsion.

If $N^a_i$ is the Berwald non-linear connection $\mathcal{G}^a_i$, then we can replace in the previous formula $\nabla^{HN}$ for $\nabla^{HB}$, and set $\tau=0$,
\begin{equation} \label{fpo}
\nabla_i^{HB} R^a_{jk}+\nabla_j^{HB} R^a_{ki}+\nabla_k^{HB} R^a_{ij}=0.
\end{equation}
If $N^a_i$ is the Berwald non-linear connection $\mathcal{G}^a_i$, but we wish to use $\nabla^{HC}$ then we arrive at
\begin{equation} \label{fpp}
\nabla_i^{HC} R^a_{jk}+\nabla_j^{HC} R^a_{ki}+\nabla_k^{HC} R^a_{ij}+R^l_{jk} L^a_{li}+R^l_{ki} L^a_{lj}+ R^l_{ij} L^a_{lk}=0.
\end{equation}

\subsubsection{First Bianchi identities}
Let  $e\colon E\to T^*B\otimes_E VE$ be a soldering form.
The first Bianchi identity is
\[
[\mathcal{N},\tau]+[R,e]=0.
\]
Using again the canonical soldering form  ($e^a_i=\delta^a_i$)
\[
[\p_i \tau^a_{jk}- N^b_i \frac{\p}{\p y^b} \tau^a_{jk}+\tau^b_{jk} N^a_{bi}-e^b_i\frac{\p}{\p y^b} R^a_{j k}] \,\frac{\p}{\p y^a}\otimes \dd x^i \wedge \dd x^j\wedge \dd x^k=0
\]
that is
\begin{equation} \label{mke}
\nabla^{HN}_{[i} \tau^l_{jk]}+\tau^l_{s[i} \tau^s_{jk]}-\frac{\p}{\p y^{[i}} R^l_{jk]}=0,
\end{equation}
where $[\,]$ denotes antisymmetrization. In Sect.\ \ref{mog} we shall see that it can be recasted in a more familiar form.
%

\subsection{Curvature of the linear connection} \label{cuh}
The linear connection $\nabla$ allows us to introduce a curvature
\[
R^\nabla(\check X,\check Y)\colon VE\to VE,
\]
in the usual way. Namely, let $\check X,\check Y\colon E\to TE$ and $\tilde{Z}\colon E\to VE$ (which is equivalent as assigning  fibered morphism $Z:E\to E$, where $\tilde{Z}$ is the {\em vertical lift} of $Z$, also denoted $Z^V$), and define
\begin{equation} \label{lld}
R^\nabla(\check X,\check Y) \tilde{Z}= \nabla_{\check X}  \nabla_{\check Y} \tilde{Z}- \nabla_{\check Y}  \nabla_{\check X}\tilde{Z}-\nabla_{[\check X,\check Y]} \tilde{Z}.
\end{equation}
However, $\check X,\check Y\colon E\to TE$ can be chosen horizonal or vertical leading to a splitting analogous to that obtained for the connection.

%

\subsubsection{The HH-curvature} \label{mog}
Let $Z\colon E\to E$ be a fibered morphisms and let $X,Y\colon M\to TM$ be sections, we define
\begin{align*}
R^{HH}(X,Y)Z &:=R^\nabla(\mathcal{N}(X),\mathcal{N}(Y)) \tilde{Z}\\
&=\nabla_{\mathcal{N}(X)} \nabla_{\mathcal{N}(Y)} \tilde{Z}-\nabla_{\mathcal{N}(Y)}  \nabla_{\mathcal{N}(X)}  \tilde{Z}-\nabla_{[\mathcal{N}(X),\mathcal{N}(Y)]} \tilde{Z}.
\end{align*}
Now we use an equivalent definition  for the curvature of a non-linear connection\footnote{This is Eq.\ (1) of \cite{modugno91} with our choice of coefficients for the curvature. It can be obtained from Eq.\ (\ref{nns}) with $\mathcal{N}(X)=X^i(x)\,\delta/\delta x^i$, $\mathcal{N}(Y)=Y^j(x) \,\delta/\delta x^j$.}
\begin{equation} \label{dom}
[\mathcal{N}(X),\mathcal{N}(Y)]=\mathcal{N}([X,Y])-R(X,Y) ,
\end{equation}
thus
\begin{align*}
R^{HH}(X,Y) Z&=[\nabla^H_X \nabla^H_YZ-\nabla^H_Y \nabla^H_XZ-\nabla^H_{[X,Y]} Z]+\nabla^V_{R(X,Y)} Z.
\end{align*}
 In components
\[
R^{HH}{}^i_{\, jkl} Z^j=(\nabla^H_k \nabla^H_l Z- \nabla^H_l \nabla^H_k Z)^i+ (\nabla^V_m Z)^i R^m_{kl}.
\]
All the notable Finsler connections satisfy $V^a_{bc}(x,y) y^b=0$, thus $(\nabla^V_c L)^a=\delta^a_c$ where $L$ is the Liouville vector field, and hence
\begin{equation} \label{mco}
R^{HH}{}^i_{\, jkl} y^j=R^{i}_{\, kl}.
\end{equation}
Equivalently,
\[
R^{HH}(X,Y)L=R(X,Y).
\]
In particular this contraction of the HH-curvature depends only on the curvature $R$ of the induced non-linear connection. More generally, we get
\begin{equation} \label{vll}
R^{HH}{}^i_{\, jkl}=\frac{\delta}{\delta x^k} \,H^i_{jl}-\frac{\delta}{\delta x^l} \, H^i_{jk}+H^i_{mk} H^m_{jl}-H^i_{ml} H^m_{jk}+R^m_{kl} V^i_{jm} .
\end{equation}
In the Berwald case, using Eq.\ (\ref{nnd}) and (\ref{cur}) we obtain an improvement over Eq.\  (\ref{mco}). With obvious meaning of the notation
\begin{equation} \label{msp}
\frac{\p}{\p y^j} R^i_{kl}=R^{HH}_{\textrm{Ber}}{}^i_{\, jkl}= \frac{\delta}{\delta x^k}\, \mathcal{G}^i_{jl}-\frac{\delta}{\delta x^l} \, \mathcal{G}^i_{jk}+\mathcal{G}^i_{mk} \mathcal{G}^m_{jl}-\mathcal{G}^i_{ml} \mathcal{G}^m_{jk}.
\end{equation}
Using Eq.\ (\ref{lan}) we obtain
\begin{equation} \label{mmx}
R^{HH}_{\textrm{Ber}}{}^i_{\, jkl}=R^{HH}_{\textrm{ChR}}{}^i_{\, jkl}+\nabla^{HC}_k L^i_{jl}-\nabla^{HC}_l L^i_{jk}+g^{is}(L_{skm} L_{jln}-L_{slm} L_{jkn}) g^{mn} ,
\end{equation}
which, lowering the first index, can be rewritten
\begin{equation}
R^{HH}_{\textrm{Ber}}{}_{ijkl}=R^{HH}_{\textrm{ChR}}{}_{ijkl}+\nabla^{HC}_k L_{ijl}-\nabla^{HC}_l L_{ijk}+(L_{ikm} L_{jln}-L_{ilm} L_{jkn}) g^{mn} .
\end{equation}
Using Eq. (\ref{vll}) it is easy to see that (compare \cite[Eq.\ 1.34 Chap.\ IV]{rund59})
\begin{equation} \label{mmz}
R^{HH}_{\textrm{Car}}{}^i_{\, jkl}=R^{HH}_{\textrm{ChR}}{}^i_{\, jkl}+R^m_{kl} C^i_{jm} .
\end{equation}

Thanks to these results, the first Bianchi identity for the non-linear connection, Eq.\ (\ref{mke}), takes the more familiar form
\begin{equation} \label{llf}
R^{HH}_{\textrm{Ber}}{}^i_{\, [jkl]}=R^{HH}_{\textrm{ChR}}{}^i_{\, [jkl]}=0.
\end{equation}
They can be identified with the first Bianchi identities for the HH-curvature.
For  the Cartan Finsler connection these identities take the form
\begin{equation} \label{llg}
R^{HH}_{\textrm{Car}}{}^i_{\, [jkl]}=R^m_{[jk} C^i_{l]m} .
\end{equation}

\subsubsection{The VH-curvature}
Let $Z\colon E\to E$ be a fibered morphisms and let $X,Y\colon M\to TM$ be sections. The mixed vertical-horizontal curvature is
\begin{align*}
R^{VH}(X,Y) Z&=-R^{HV}(Y,X) Z:=\nabla_{\tilde X}  \nabla_{\mathcal{N}(Y)}  \tilde Z-\nabla_{\mathcal{N}(Y)} \nabla_{\tilde X} \tilde Z-\nabla_{[\tilde X, \,\mathcal{N}(Y)]} \tilde Z,
\end{align*}
thus
\[
R^{VH} {}^i_{\, jkl}=-R^{HV} {}^i_{\, jlk}= -\frac{\delta}{\delta x^l} V^i_{jk}+\frac{\p}{\p y^k} H^i_{jl}-H_{ml}^i V^m_{jk}+V^i_{mk} H^m_{jl}+V^i_{j m} N^m_{kl}.
\]
For the Berwald Finsler connection
\[
R^{VH}_{Ber} {}^i_{\, jkl}= \mathcal{G}^i_{jkl}.
\]
For the Cartan Finsler connection
\begin{align}
R^{VH}_{\textrm{Car}} {}^i_{\, jkl}&= -\nabla^{HC}_l C^i_{jk}+ \frac{\p}{\p y^k} \Gamma^{i}_{jl}+C^i_{jm} L^m_{kl} \nonumber \\
&= \mathcal{G}^i_{jkl}-\frac{\p}{\p y^k} L^{i}_{jl}+C^i_{jm} L^m_{kl}-\nabla^{HC}_l C^i_{jk}. \label{dju}
\end{align}
For the Chern-Rund Finsler connection (a tensor proportional to this one is often denoted with $P$, see \cite{bao00})
\begin{equation}
R^{VH}_{\textrm{ChR}} {}^i_{\, jkl}= \frac{\p}{\p y^k} \Gamma^{i}_{jl}=\mathcal{G}^i_{jkl}-\frac{\p}{\p y^k} L^{i}_{jl}.
\end{equation}
For the Hashiguchi Finsler connection
\[
R^{VH}_{\textrm{Has}} {}^i_{\, jkl}= \mathcal{G}^i_{jkl}-\nabla^{HB}_l C^i_{jk}.
\]
It is interesting to notice that
\begin{equation} \label{mlg}
R^{VH}_{\textrm{ChR}} {}^m_{\, mkl}=\nabla^{HB}_l I_k,
\end{equation}
as can be easily obtained using Eq.\ (\ref{lll}).

\subsubsection{The VV-curvature}
Analogously, given sections $X,Y:M\to TM$, and a fibered morphism $Z\colon E\to E$, let us define
\begin{align*}
R^{VV}(X,Y) Z&=\nabla_{\tilde X} \nabla_{\tilde Y} \tilde Z-\nabla_{\tilde Y}  \nabla_{\tilde X}  \tilde Z-\nabla_{[\tilde X, \tilde Y]} \tilde Z,
\end{align*}
where $\tilde{X}\colon E\to VE=E\times_M E$ is such that its second component is $X$, and analogously for $\tilde{Y}$ (this is also called {\em vertical lift} and denoted $X^V$).
Since the vertical distribution is integrable, its integral being the fiber, we obtain
\begin{align*}
R^{VV}(X,Y) Z&=\nabla_{X}^V \nabla^V_{ Y} Z-\nabla^V_{ Y}  \nabla^V_{X}  Z-\nabla^V_{[ X, Y]} Z,
\end{align*}
and hence in components
\begin{equation}
R^{VV} {}^i_{\ jkl}= \frac{\p}{\p y^k} V^i_{jl}-\frac{\p}{\p y^l} V^i_{jk} +V^i_{m k} V^m_{jl}-V^i_{ml} V^m_{jk}.
\end{equation}
There are two interesting cases, namely the trivial one $V^a_{bc}=0$ (Berwald and Chern-Rund) and the case $V^a_{bc}=C^a_{bc}$, which holds for the Cartan and Hashiguchi Finsler connections. For the latter case, let us use $\frac{\p}{\p y^k} \, g^{is}=-2g^{ip} C_{pqk} g^{qs}$ to show first that
\[
\frac{\p}{\p y^k}\, C^i_{jl}= -2 C^i_{mk} C^m_{jl} +g^{im} C_{mjlk},
\]
and hence that (see also Rund \cite[Chap.\ IV]{rund59} and \cite{kikuchi62})
\begin{equation} \label{kik}
R^{VV} {}^i_{jkl}= C^i_{ml} C^m_{jk}-C^i_{m k} C^m_{jl}.
\end{equation}
The VV-Ricci tensor is
\[
R^{VV}_{jl}=R^{VV} {}^s_{jsl}= C^s_{ml} C^m_{js}-I_m C^m_{jl}.
\]
An interesting theorem by Matsumoto \cite{matsumoto77} establishes that in a pseudo-Finsler manifold which is 4-dimensional $R^{VV}_{jl}=0$ if and only if $R^{VV} {}^i_{jkl}=0$. Thus in the positive definite case, by Brickell theorem \cite{brickell67}, $C_{ijk}=0$, namely the Finsler space is Riemannian.

\subsection{Torsion of the linear connection} \label{cui}

The torsion of $\nabla$ can be introduced through
\begin{equation} \label{too}
T^{(\nabla,\sigma)}(\check X,\check Y)= \nabla_{\check X} \,\sigma(\check Y)-\nabla_{\check Y}\, \sigma(\check X)-\sigma([\check X,\check Y]) ,
\end{equation}
only once  a  soldering form $\sigma\colon TE\to VE$ has been given.\footnote{Of course Eq.\ (\ref{tor}) for a linear connection on the bundle $\pi_E:\tilde{E}\to E$ would give the same definition.} We shall only consider surjective soldering forms. There are two natural choices for the soldering form.

\subsubsection{`Horizontal' soldering form}
It is natural to introduce the soldering form $\sigma_h$ through $\sigma_h(\check X)=(T\pi_M(\check X))^V$ so that, given $X:M\to TM$, $\sigma_h(\mathcal{N} (X))=\tilde X$. Thus we define
\[
T_{\textrm{hor}}^{HH}(X,Y)= T^{(\nabla,\sigma_h)}(\mathcal{N}(X),\mathcal{N}(Y))=\nabla^H_X Y-\nabla^H_Y X-[X,Y],
\]
where we used Eq.\ (\ref{dom}) and $\sigma_h(R(X,Y))=0$. Its components in a holonomic base are
\[
T_{\textrm{hor}}^{HH} {}^k_{ij}=H_{ji}^k-H^k_{ji},
\]
 so this torsion vanishes for all the notable Finsler connections.

If the first lifted vector is taken vertical we obtain another torsion
\[
T_{\textrm{hor}}^{VH}(X,Y)= T^{(\nabla,\sigma_h)}(\tilde{X},\mathcal{N}(Y))=\nabla^V_X Y-\sigma_h[\tilde X,\mathcal{N}(Y)]=V^k_{ji} Y^j X^i \frac{\p}{\p y^k},
\]
that is
\[
T_{\textrm{hor}}^{VH} {}^k_{ij}=V^k_{ji},
\]
which vanishes for Berwal's and Chern-Rund's. This equation shows that it is indeed appropriate to call the tensor $C^i_{jk}$ the {\em Cartan torsion}, as it is a torsion of the Cartan connection. The vertical-vertical version vanishes identically.

\subsubsection{`Vertical' soldering form}
Let us define $\sigma_v={\nu}$, so that $\sigma_v(\tilde{X})=X$, $\sigma(\mathcal{N}(X))=0$.
\[
T_{\textrm{ver}}^{VV}(X,Y)= T^{(\nabla,\sigma_v)}(\tilde X,\tilde Y)=\nabla^V_X Y-\nabla^V_Y X-[X,Y] .
\]
Its components are
\[
T_{\textrm{ver}}^{VV} {}^k_{ij}=V_{ji}^k-V_{ij}^k,
\]
so this torsion vanishes for all the notable Finsler connections.

If the second lifted vector is taken horizontal we obtain
\[
T_{\textrm{ver}}^{VH}(X,Y)=T^{(\nabla,\sigma_v)}(\tilde X,\mathcal{N}(Y))=-\nabla^H_Y X-{\nu}([\tilde{X},\mathcal{N}(Y)])=-\nabla^H_Y X+\nabla^{HN}_Y X,
\]
which has components
\[
T_{\textrm{ver}}^{VH} {}^k_{ij}=-H^k_{ij}+N^k_{ij}.
 \]
 Thus this torsion vanishes  for the Berwald and Hashiguchi connections, while it coincides with the Landsberg tensor for the Cartan and Chern-Rund connections. Let us evaluate $R^{VH}(X,Y) Z$ where $Z=L$, with $L$ the Liouville vector field. We have mentioned that for all the notable Finsler connections  $\nabla_{v} L=\tilde \nu(v)$, which is equivalent to $V^a_{ij}y^i=0$. As a consequence,
\[
R^{VH}(X,Y)L=T^{VH}_{ver}(Y,X),
\]
or in components
\[
R^{VH} {}^i_{jkl} y^j=T^{VH}_{ver}{}^i_{kl}.
\]

The horizontal-horizontal version  gives
\[
T_{\textrm{ver}}^{HH}(X,Y)=T^{(\nabla,\sigma_v)}(\mathcal{N}(X),\mathcal{N}(Y))
=-{\nu}([\mathcal{N}(X),\mathcal{N}(Y)])=R(X,Y),
\]
where we used once again Eq.\ (\ref{dom}). Thus this torsion coincides with the non-linear curvature.

\subsubsection{Characterization of  Finsler connections}

From the previous results it is easy to characterize the notable  Finsler connections from their torsion properties.
\begin{proposition}
Let us consider the family of regular Finsler connections whose induced non-linear connection is Berwald's. The notable Finsler connections satisfy
\begin{equation} \label{vvf}
T^{HH}_{hor}=T^{VV}_{ver}=0.
\end{equation}
Under this equality a linear Finsler connection coincides with the Berwald connection if and only if $T^{VH}_{ver}=0$ and $T^{VH}_{hor}=0$.

The Cartan connection is the only Finsler connection which satisfies (\ref{vvf}) and is metric compatible.

The Chern-Rund connection is the only Finsler connection which satisfies $T^{HH}_{hor}=T^{VH}_{hor}=0$ (hence $T^{VV}_{ver}=0$) and which is horizontally metric compatible $\nabla^H g=0$.

The Hashiguchi Finsler connection is the only Finsler connection which satisfies  $T^{VV}_{ver}=T^{VH}_{ver}=0$ (hence $T^{HH}_{hor}=0$) and which is vertically metric compatible $\nabla^V g=0$.

\end{proposition}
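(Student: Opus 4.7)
The proposition is essentially a reorganisation of facts already derived: in the preceding subsections we computed each of the four torsions in terms of the coefficient triple $(N^a_i, H^a_{bi}, V^a_{bc})$, namely
\[
T^{HH}_{\textrm{hor}}{}^k_{ij}=H^k_{ji}-H^k_{ij},\quad T^{VV}_{\textrm{ver}}{}^k_{ij}=V^k_{ji}-V^k_{ij},\quad T^{VH}_{\textrm{hor}}{}^k_{ij}=V^k_{ji},\quad T^{VH}_{\textrm{ver}}{}^k_{ij}=-H^k_{ij}+N^k_{ij},
\]
with $N^k_{ij}=\mathcal{G}^k_{ij}$ since the non-linear connection is Berwald's. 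So the plan is just to translate the torsion hypotheses into algebraic constraints on $H$ and $V$ and then invoke the uniqueness of the Christoffel-type formulas (\ref{mqf}) for the horizontal and vertical Cartan coefficients.

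First I would verify (\ref{vvf}) for all four notable connections by pure inspection: the coefficients $\mathcal{G}^i_{jk}$, $\Gamma^i_{jk}$, $C^i_{jk}$ and $0$ are each symmetric in the lower indices, so both $T^{HH}_{\textrm{hor}}$ and $T^{VV}_{\textrm{ver}}$ vanish automatically. This establishes the common framework for the subsequent characterisations.

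Next I would handle the four characterisations in turn. For Berwald, assuming (\ref{vvf}), the condition $T^{VH}_{\textrm{ver}}=0$ forces $H^k_{ij}=\mathcal{G}^k_{ij}$ and $T^{VH}_{\textrm{hor}}=0$ forces $V^k_{ij}=0$, which is exactly the triple of Def.\ \ref{clp}. For Cartan, under (\ref{vvf}) both $H$ and $V$ are symmetric, so the horizontal metric compatibility $\nabla^H g=0$ yields $H^i_{jk}=\Gamma^i_{jk}$ by (\ref{mqf}), and the vertical metric compatibility $\nabla^V g=0$ yields $V^i_{jk}=C^i_{jk}$ by the analogous formula just below (\ref{mqf}); uniqueness follows at once from the Christoffel-type inversion. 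For Chern-Rund, $T^{VH}_{\textrm{hor}}=0$ gives $V^i_{jk}=0$ (which automatically implies $T^{VV}_{\textrm{ver}}=0$), while $T^{HH}_{\textrm{hor}}=0$ together with $\nabla^H g=0$ again gives $H^i_{jk}=\Gamma^i_{jk}$ via (\ref{mqf}). For Hashiguchi, $T^{VH}_{\textrm{ver}}=0$ gives $H^i_{jk}=\mathcal{G}^i_{jk}$ (which is symmetric, so $T^{HH}_{\textrm{hor}}=0$ is automatic), while $T^{VV}_{\textrm{ver}}=0$ with $\nabla^V g=0$ forces $V^i_{jk}=C^i_{jk}$.

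There is no real obstacle here: every step is an application of either (i) the torsion formulas of the previous subsection (read off algebraically), or (ii) the standard Christoffel-type inversion of a metric-compatibility condition with a symmetric connection. The only point worth a brief line of justification is the parenthetical "hence $T^{VV}_{\textrm{ver}}=0$" and "hence $T^{HH}_{\textrm{hor}}=0$" remarks, which follow respectively from $V=0$ being trivially symmetric and from $H=\mathcal{G}$ being symmetric in its lower indices.
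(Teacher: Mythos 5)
Your proposal is correct and follows exactly the route the paper intends: the paper offers no separate proof of this proposition, stating only that it follows "from the previous results," namely the component formulas $T^{HH}_{\textrm{hor}}{}^k_{ij}=H^k_{ji}-H^k_{ij}$, $T^{VV}_{\textrm{ver}}{}^k_{ij}=V^k_{ji}-V^k_{ij}$, $T^{VH}_{\textrm{hor}}{}^k_{ij}=V^k_{ji}$, $T^{VH}_{\textrm{ver}}{}^k_{ij}=-H^k_{ij}+N^k_{ij}$ together with the Christoffel-type equivalences in (\ref{mqf}). Your write-up simply makes explicit the algebraic translation and the uniqueness step that the paper leaves to the reader, so there is nothing to add.
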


As for every Finsler connection, the  Berwald connection is not torsionless since $T^{HH}_{ver}=R$. Still, it is as torsionless as a Finsler connection can be.
Some authors use only the horizontal soldering form and for this reason they claim that the Berwald connection is torsionless \cite{akbarzadeh88}.

\subsection{Curvature symmetries induced by metric compatibility}

In Riemannian geometry it is well known that the curvature of a {\em Levi-Civita} connection  shares an additional symmetry due the compatibility with the metric. The same happens in Finsler geometry.

Let $\tilde Z_1,\tilde Z_2 \colon E\to VE$, $ \check X, \check Y\colon E\to TE$ be vector fields. Let $\nabla$ be the Cartan Finsler connection, so that $\nabla g=0$ (as this equation holds over the horizontal and vertical vectors). We have
\begin{align*}
g(\tilde Z_1, R^\nabla(\check X,\check Y) \tilde{Z}_2)&= g(\tilde Z_1, \nabla_{\check X}  \nabla_{\check Y} \tilde{Z}_2- \nabla_{\check Y}  \nabla_{\check X}\tilde{Z}_2-\nabla_{[\check X,\check Y]} \tilde{Z}_2)\\
&= \nabla_{\check X} [g(\tilde Z_1,   \nabla_{\check Y} \tilde{Z}_2)]-g(\nabla_{\check X} \tilde Z_1,   \nabla_{\check Y} \tilde{Z}_2) - \nabla_{\check Y}  [g(\tilde Z_1,  \nabla_{\check X}\tilde{Z}_2)]\\
&+g(\nabla_{\check Y} \tilde Z_1,  \nabla_{\check X}\tilde{Z}_2)- \nabla_{[\check X,\check Y]} g(\tilde Z_1,\tilde Z_2)+g(\nabla_{[\check X,\check Y]} \tilde Z_1, \tilde Z_2)\\
&=\cdots=g(R^\nabla(\check Y,\check X) \tilde Z_1, \tilde{Z}_2)=-g(R^\nabla(\check X,\check Y) \tilde Z_1, \tilde{Z}_2) .
\end{align*}
Evaluating this equation with $\check X$, $\check Y$ in the combinations, horizontal-horizontal, vertical-horizontal and vertical-vertical, we get
\begin{align}
R^{HH}_{Car}\,{}_{ijkl}&=-R^{HH}_{Car}\,{}_{jikl}, \label{fir}\\
R^{VH}_{Car}\,{}_{ijkl}&=-R^{VH}_{Car}\,{}_{jikl}, \label{sec}\\
R^{VV}_{Car}\,{}_{ijkl}&=-R^{VV}_{Car}\,{}_{jikl} . \label{thi}
\end{align}

%
Here, as with the other tensors, we lowered the upper index to the left $R_{ijkl}=g_{is} R^{i}_{\, jkl}$ (observe that in order to avoid ambiguities we do not lower, in any circumstance, the upper index of the curvature $R^i_{jk}$ of the non-linear connection).

The VV-symmetry is not particularly interesting as it is identically satisfied due to Eq.\ (\ref{kik}).


\subsubsection{The HH-symmetry}


Let us  show a different approach to  (\ref{fir}) and related symmetries.
Denoting $\frac{\delta}{\delta x^\alpha} f$ with $f_{:\,\alpha}$, and using Eq.\ (\ref{mqf}) the $HH$-Chern-Rund curvature can be rewritten
\begin{align*}
R^{HH}_{\textrm{ChR}} {}_{ijkl}&=\Gamma_{ijl:\,k}-\Gamma_{ijk:\,l}-\Gamma_{mik} \Gamma^m_{jl}+\Gamma_{mil}\Gamma^m_{jk}\\
&\!\!\!\!\!\!\!\!\!\!\!=\frac{1}{2}(g_{ij:l:k}-g_{ij:k:l}+g_{jk:i:l}+g_{li:j:k}-g_{jl:i:k}-g_{ki:j:l}) -\Gamma_{mik} \Gamma^m_{jl}+\Gamma_{mil}\Gamma^m_{jk}
\end{align*}
where we lowered the indices of the connection coefficients to the left.
From here, using Eq.\ (\ref{nns}) it is easy to find the symmetry for the Chern-Rund curvature. For instance
\[
R^{HH}_{\textrm{ChR}} {}_{ijkl}-R^{HH}_{\textrm{ChR}} {}_{klij}=R^m_{ki} C_{mjl}-R^m_{kj} C_{mli}-R^m_{li} C_{mjk}+R^m_{lj} C_{mki}-R^m_{kl} C_{mji}+R^m_{ij} C_{mkl}.
\]
 In terms of the Cartan curvature it reads
\begin{equation} \label{don}
R^{HH}_{\textrm{Car}} {}_{ijkl}-R^{HH}_{\textrm{Car}} {}_{klij}=R^m_{ki} C_{mjl}-R^m_{kj} C_{mli}-R^m_{li} C_{mjk}+R^m_{lj} C_{mki}.
\end{equation}
Using Eqs. (\ref{mmx})-(\ref{mmz}) we obtain from (\ref{fir})
\begin{align}
R^{HH}_{\textrm{Ber}}\,{}_{ijkl}+R^{HH}_{\textrm{Ber}}\,{}_{jikl}&= -2R^m_{kl} C_{ijm} +2(\nabla^{HC}_k L_{ijl}-\nabla^{HC}_{l} L_{ijk}), \\
R^{HH}_{\textrm{ChR}}\,{}_{ijkl}+R^{HH}_{\textrm{ChR}}\,{}_{jikl}&=-2R^m_{kl} C_{ijm} .  \label{fyg}
\end{align}
{}\\
\indent {\bf Contracted HH-symmetry}.\\

\noindent  Contracting the last two equations
\begin{align}
R^{HH}_{\textrm{Ber}}\,{}^i_{\, ikl}&= -R^m_{kl}I_m+\nabla^{HC}_k J_l-\nabla^{HC}_l J_k, \\
R^{HH}_{\textrm{ChR}}\,{}^i_{\, ikl} &=-R^m_{kl}I_m, \label{bka}
\end{align}
which using the first Bianchi identities (\ref{llf})-(\ref{llg}) leads upon contraction of $i$ and $k$ to the following symmetry property of the HH-Ricci tensor
\begin{align}
R^{HH}_{\textrm{Ber}}{}^k_{\, jkl}-R^{HH}_{\textrm{Ber}}{}^k_{\, lkj}&=R^m_{lj}I_m +\nabla^{HC}_j J_l-\nabla^{HC}_l J_j, \label{ohl}\\
R^{HH}_{\textrm{ChR}}{}^k_{\, jkl}-R^{HH}_{\textrm{ChR}}{}^k_{\, lkj}&=R^m_{lj}I_m, \label{ohh}\\
R^{HH}_{\textrm{Car}}{}^k_{\, jkl}-R^{HH}_{\textrm{Car}}{}^k_{\, lkj}&=R^m_{lj}I_m+R^m_{js} C^s_{lm}-R^m_{ls} C^s_{jm}.
\end{align}
%
%
%
%

\subsubsection{The VH-symmetry}

Let us consider the symmetry
\begin{equation}
R^{VH}_{\textrm{Car}}\,{}_{ijkl}=-R^{VH}_{\textrm{Car}}\,{}_{jikl},
\end{equation}
which using Eq.\ (\ref{dju}) and lowering the upper index of $\mathcal{G}^i_{\, jkl}$ to the left, becomes
\[
\mathcal{G}_{ijkl}+\mathcal{G}_{jikl}-2\nabla^{HC}_l C_{ijk}-2\frac{\p}{\p y^k} L_{ijl}+2C_{imk} L^m_{jl}+2C_{ijm} L^m_{kl}+2C_{mjk} L^m_{il}=0.
\]
We can rewrite it as
\begin{equation} \label{mkw}
\mathcal{G}_{ijkl}+\mathcal{G}_{jikl}-2\nabla^{HB}_l C_{ijk}-2\frac{\p}{\p y^k} L_{ijl}=0.
\end{equation}

An interesting equation is obtained as follows. Let us symmetrize (\ref{mkw}) in the indices $k,l$. Subtract to the found equation  the same equation with $k$ and $j$ exchanged and  sum to it the same equation with $k$ and $i$ exchanged and finally subtract Eq.\ (\ref{aal}). This computation gives
\[
\mathcal{G}_{ijkl}=\nabla^{HB}_i C_{jkl}+\frac{\p}{\p y^k} L_{ijl}+\frac{\p}{\p y^j} L_{ilk}+\frac{\p}{\p y^l} L_{ijk}-2\frac{\p}{\p y^i} L_{jkl},
\]
which seems particularly interesting especially for the study of Landsberg or weakly-Landsberg spaces (it could have been obtained from Eqs. (\ref{csc}) and (\ref{jji})).
{}\\

\indent {\bf Contracted VH-symmetry}.\\

\noindent Contracting (\ref{mkw}) with $g^{ij}$ we obtain the very interesting equality (use $\nabla^{HB}_l g^{ij}=2L^{ij}_{\ l}$ and $\frac{\p}{\p y^k} \, g^{ij}=-2C^{ij}_{\ k}$)
\begin{equation} \label{mfg}
\mathcal{G}^i_{\,\, ilk}=\mathcal{G}^i_{\,\, ikl}=\nabla^{HB}_l I_k+\frac{\p}{\p y^k} J_l,
\end{equation}
which contracted once again with $g^{kl}$ gives
\begin{equation}
\mathcal{G}^i_{\, \, i}{}^k_{\,\, k}=\nabla^{HB}_k I^k+\frac{\p}{\p y^k} J^k.
\end{equation}
An important consequence of Eqs.\ (\ref{mkv}), (\ref{ohl})-(\ref{ohh}) and (\ref{mfg}) which does not seem to have been noticed before is

\begin{theorem} \label{aes}
Every weakly (pseudo-)Riemannian space  has symmetric Chern-Rund and  Berwald HH-Ricci tensors. Moreover, these  spaces are weakly Berwald.
\end{theorem}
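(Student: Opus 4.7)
The plan is to show that the theorem follows by assembling three identities already established earlier in the text, with no essentially new computation required. The hypothesis is $I_i=0$, and the task reduces to extracting the symmetry of the Chern-Rund and Berwald HH-Ricci tensors and the vanishing of the mean Berwald curvature $E_{ij}$.

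First I would observe that (\ref{mkv}) reads $J_i = y^l \nabla^{HC}_l I_i$, so under the hypothesis $J_i$ vanishes identically as a tensor; that is, every weakly pseudo-Riemannian space is automatically weakly Landsberg. This auxiliary fact is what allows the subsequent steps to go through, in particular making the Berwald Ricci computation tractable.

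Next, equation (\ref{ohh}) states that $R^{HH}_{\textrm{ChR}}{}^k_{\, jkl} - R^{HH}_{\textrm{ChR}}{}^k_{\, lkj} = R^m_{lj} I_m$, whose right-hand side vanishes by hypothesis, giving the symmetry of the Chern-Rund HH-Ricci tensor. Equation (\ref{ohl}) for the Berwald case has antisymmetric part $R^m_{lj} I_m + \nabla^{HC}_j J_l - \nabla^{HC}_l J_j$; all three terms vanish since both $I$ and $J$ are zero as tensors, so the Berwald HH-Ricci tensor is also symmetric.

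Finally, using the total symmetry of $\mathcal{G}^i_{\,jkl}$ in its lower three indices I would rewrite $2E_{ij} = \mathcal{G}^l_{\, ijl} = \mathcal{G}^l_{\, lij}$ and apply (\ref{mfg}) with the relabeling $k\to i$, $l\to j$ to obtain $2E_{ij} = \nabla^{HB}_j I_i + \frac{\p}{\p y^i} J_j$, which vanishes under the hypothesis together with the already-derived $J=0$. Hence $E_{ij}=0$ and the space is weakly Berwald. There is no real obstacle here; the only care required is the index bookkeeping needed to identify $E_{ij}$ with the contraction appearing in (\ref{mfg}), and the sequencing point that one must deduce weak Landsbergness from (\ref{mkv}) \emph{before} invoking (\ref{ohl}) and (\ref{mfg}), since those identities involve $J$ as well as $I$.
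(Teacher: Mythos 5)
Your proposal is correct and follows exactly the route the paper intends: the paper's entire justification is the sentence preceding the theorem citing Eqs.\ (\ref{mkv}), (\ref{ohl})--(\ref{ohh}) and (\ref{mfg}), and you assemble precisely those identities, including the necessary intermediate observation (also made in the paper just before the theorem) that $I=0$ forces $J=0$ via (\ref{mkv}), and the index bookkeeping identifying $2E_{ij}$ with the contraction in (\ref{mfg}).
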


\begin{remark}
This result is important for the Finslerian generalization of general relativity. Many authors tried to impose equations analogous to Einstein's but faced the problem of non-symmetric Ricci tensors and non-symmetric stress energy tensors. The previous result suggests to impose, aside with some tensorial generalization of the Einstein's equations, (there are a few possibilities, e.g.:)
\begin{equation} \label{djf}
R^{HH}_{\textrm{ChR}} {}_{ij}-\frac{1}{2}\, R^{HH}_{\textrm{ChR}} \,g_{ij}+\Lambda g_{ij}=8\pi T_{ij},
\end{equation}
 the equation
\begin{equation} \label{lao}
I_j=0,
\end{equation}
namely to work with {\em weakly Lorentzian-Finsler spaces}. This fact does not seem to have been previously recognized. In fact, most authors working in Fislerian gravity theories actually did their calculations in the positive definite case with the idea of obtaining the physical Lorentz-Finsler equations through a straightforward tensorial generalization.
%
However, by Deicke's theorem, the previous equation is too restrictive in the positive definite case so  one should work directly with Lorentz-Finsler spaces.

Equation (\ref{djf}) is not completely satisfactory because it does not imply the stress-energy conservation. This problem can be solved provided the Finsler space is Landsberg:
\begin{equation} \label{lop}
L_{jkl}=0.
\end{equation}
Observe that the contracted HHH-second Bianchi identity (\ref{sop}) reads, contracting $k$ with $l$ and $s$ with $j$ (see Sect.\ \ref{xib}, since we assume that the space is Landsberg's  the Chern-Rund and Berwald connections coincide and so we can omit the reference to the  connection)
\begin{equation} \label{kkg}
\nabla^H_j (R^{HH} {}^{m j}_{\ \ \, m i} +R^{HH} {}^{jm}_{\ \ \,\, im}-\delta^j_i R^{HH} {}^{m l}_{\ \ \, m l})=\mathcal{G}^{l s}_{\ m i} R^m_{ls}+\mathcal{G}^{ls}_{\ m l} R^{m}_{si} +\mathcal{G}^{ls}_{\ ms} R^m_{il}=0.
\end{equation}
The last equality follows because by Eq.\ (\ref{jji}) for a Landsberg space $\mathcal{G}_{ijkl}$ is totally symmetric and because of Theorem \ref{aes}.
We define the symmetric tensor
\[
G_{ij}:=\frac{1}{2}(R^{HH} {}^{ \ m}_{i \ \,\, jm} +R^{HH} {}^{\ m}_{j \ \,\, im}-g_{ij} R^{HH} {}^{m l}_{\ \ \, m l} ),
\]
and write the field equations as
\begin{equation} \label{max}
G_{ij}+\Lambda g_{ij}=8\pi T_{ij}.
\end{equation}
From Eq.\ (\ref{bka}) we obtain $R^{HH}{}^m_{\ m kl}=-R^m_{kl} I_m=0$, and from Eq.\ (\ref{fyg}) contracting $j$ with $l$ and using the fact that by $I_j=0$ Ricci is symmetric, and using Eq.\ (\ref{fyg}), $R^{HH} {}^m_{\ jmi}-R^{HH} {}_{i \, \ j m}^{\ m}=2R^m_{jl} C^l_{mi}$.
Then, (\ref{kkg})  reads
\begin{equation}
\nabla^H_j (G_{i}^{\ j}+g^{jk}R^m_{ks} C^s_{im})=0,
\end{equation}
which implies\footnote{Under our assumption $I_j=0$ this contracted conservation is equivalent to (\ref{kkg}) contracted with $y^i$ because by (\ref{fyg}) and the symmetry of the Ricci tensor $(R^{HH}{}^m_{\ jmi}-R^{HH}{}_{i \ jm}^{\ m}) y^i=(R^{HH}{}^m_{\ jmi}-R^{HH}{}^m_{\ imj}) y^i=0$.}
\begin{equation}
\nabla_j^H (G^{j}_{\ i} y^i)=0 \quad \textrm{or equivalently}\quad \nabla_j^H (T^{j}_{\ i} y^i)=0 ,
\end{equation}
which can be considered the expression of  energy-momentum conservation for  every observer. Indeed, using a divergence theorem due to Rund \cite{rund75} for any section (observer) $s: M\to TM\backslash 0$, $g_{s}(s,s)=-1$, and field $A^k(x,y)\frac{\p}{\p y^k}$
\begin{equation}
\nabla^{s^* \!g} \cdot s^*\! A=s^*\!(\nabla^{HC} \cdot A)+s^*\!(I_j A^k+\frac{\p A^k}{\p y^j}) D_k s^j
\end{equation}
where $\nabla^{s^* \!g}$ is the Levi-Civita connection of the pullback metric $s^*\!g(x):=g(x,s(x))$. Thus
\begin{equation}
\nabla^{s^*\! g}_k (G^k_i(x,s(x)) s^i)= s^*\!(\frac{\p }{\p y^j} (G^k_i(x,y) y^i)) D_k s^j,
\end{equation}
 which is the analog of the general relativistic almost conservation equation $(T^{i j} u_j)_{; i}=T^{i j} u_{j; i}$.

To summarize, the dynamics is given by Eq.\ (\ref{lao}), (\ref{lop}) and (\ref{max}). The definition of $G_{ij}$ might need to be modified but the mechanism of conservation seems firmly established. In two previous works I have shown that the local causal structure of this theory is the correct one, with one past and one future cone \cite{minguzzi13c}, and that the global causal structure is similar to that of Lorentzian geometry \cite{minguzzi13d}.

One could have defined the stress-energy tensor by demanding  proportionality with  the  tensor appearing inside  parenthesis in Eq.\ (\ref{kkg}). However, with this choice the stress-energy tensor would not be symmetric, not even under our assumption $I_i=0$. This type of strategy is followed in \cite{li07} under the more restrictive Berwald case but with possibly $I_i\ne 0$ (please observe that in their  curvature tensors their second index plays the role of our first index).
\end{remark}

\subsection{Bianchi identities for the linear connection}

The connection $\nabla$ has lead us to a curvature $R^\nabla$, which for every pair of sections $\check X,\check Y \colon E \to TE$ provides an endomorphism $R^\nabla(\check X,\check Y)\colon VE \to VE$.

In order to write the Bianchi identity it is convenient to recall the following general approach to linear connections. Given a vector bundle $\pi_E\colon \tilde E \to E$, the covariant derivative is map
\[
\nabla\colon \Omega^0(E,\tilde E) \to \Omega^1(E,\tilde E), \quad  s \mapsto \nabla s
\]
which sends sections to  vector-valued 1-form fields. This map can be extended to a map
\[
d^\nabla\colon \Omega^p(E,\tilde E) \to \Omega^{p+1} (E,\tilde E),
\]
through the prescription
\[
d^\nabla(\Omega_p\otimes s)=d \Omega_p \otimes s+(-1)^p \Omega_p \wedge d^\nabla s
\]
where $\Omega_p$ is a $p$-form and $d^\nabla s:=\nabla s$. If $\eta$ is a vector valued form then
\[
d^\nabla (\Omega_p \wedge \eta)= d \Omega_p\wedge \eta+(-1)^p \Omega_p \wedge d^\nabla \eta.
\]
With our convention for the wedge product ($\alpha\wedge \beta =\alpha \otimes \beta -\beta \otimes \alpha$), the curvature $R^\nabla\in \Omega^2(E, \textrm{End} \tilde E)$ reads
\[
R^\nabla=d^\nabla \circ d^\nabla,
\]
that is, this expression coincides with that obtained from the usual formula which involves the covariant derivative, i.e.\ for $\check X,\check Y\in T_e E$
\[
R^\nabla(\check X, \check Y) s=\nabla_{\check X} \nabla_{\check Y} s-\nabla_{\check Y} \nabla_{\check X} s-\nabla_{[\check X,\check Y]} s.
\]
Similarly, given a soldering form $\sigma\in \Omega^1(E,\tilde E)$ the torsion is nothing but
\[
T^{(\nabla,\sigma)}=d^\nabla\sigma.
\]
It is possible to show that $d^\nabla d^\nabla \eta=R^\nabla \underset{\circ}{\wedge} \eta$, $\eta \in \Omega^1(E,\tilde E)$, where the small circle means that the endomorphism $R^\nabla$ is applied to the image vector of $\eta$.
The first Bianchi identities are
\begin{equation}
R^\nabla \underset{\circ}{\wedge} \sigma-d^\nabla T^{(\nabla,\sigma)}=0.
\end{equation}
Given a covariant derivative on the linear bundle $\tilde E\to E$ we have an induced covariant derivative on the  bundle $\textrm{End} \tilde E\to E$, given by $(\nabla_D \varphi)s=[\nabla_D, \varphi] s$, where $\varphi\in \Omega^0(E, \textrm{End}\tilde E)$ and $[,]$ is the  usual parenthesis of linear endomorphisms.
With this premise, the  second Bianchi identities are
\begin{equation}
d^\nabla R^\nabla=0,
\end{equation}
and follow easily by writing $(d^\nabla\circ d^\nabla)\circ d^\nabla s=d^\nabla\circ (d^\nabla\circ d^\nabla) s$ for an arbitrary section $s$.

In our case we have $\tilde{E}=VE$, and $E=TM\backslash 0$.  Thus for every $\check X,\check Y, \check Z\in T_e E$
\begin{align}
0&=\sum_{\textrm{cyclic}:\check X,\check Y, \check Z} \{R^\nabla(\check X, \check Y) \, \sigma(\check Z)-\nabla_{\check X} T^{(\nabla,\sigma)}(\check Y,\check Z)-T^{(\nabla,\sigma)}(\check X,[\check Y,\check Z])\}, \\
0&=\sum_{\textrm{cyclic}:\check X,\check Y, \check Z}  \{ [\nabla_{\check X}, R^\nabla(\check Y,\check Z)]+R^\nabla(\check X,[\check Y,\check Z]) \} . \label{esx}
\end{align}

The first Bianchi indentity leads to several more specialized identities. For space reasons we cannot explore all of them. We consider just the Berwald and Chern-Rund connections, in which case,  if we take all three vectors horizontal we obtain (\ref{llf}) in the horizontal soldering form case, and  (\ref{fpo})-(\ref{fpp}) in the vertical soldering form case. If we take one vector vertical and the other two vectors horizontal we obtain in the horizontal soldering form case
\begin{equation}
R^{VH}_{\textrm{Ber}} {}^l_{\,kij}=R^{VH}_{\textrm{Ber}} {}^l_{\,jik}, \qquad R^{VH}_{\textrm{ChR}} {}^l_{\,kij}=R^{VH}_{\textrm{ChR}} {}^l_{\,jik}
\end{equation}
which imply that the Berwald and Chern-Rund VH-Ricci tensors are symmetric (both equations are immediate from the expressions of these tensors). If we use the vertical soldering form we arrive at Eq.\ (\ref{msp}).

Let us study  the second Bianchi identity.

\subsubsection{HHH-second Bianchi identity} \label{xib}

Let us consider the case in which all vectors are horizonal $\check X=\mathcal{N}(X)$, $\check Y=\mathcal{N}(Y)$, $\check Z=\mathcal{N}(Z)$, where $X,Y,Z\colon M\to TM$.
Let $X=\p/\p x^i$, $Y=\p/\p x^j$, $Z=\p/\p x^k$, and let us apply Eq.\ (\ref{esx}) to $W=\p/\p y^s$.

The second  Bianchi identity becomes
\begin{equation} \label{mlx}
\sum_{\textrm{cyclic}:i,j,k}  \{ \nabla^H_i R^{HH}{}^l_{sjk}+ R^{VH}{}^l_{smi} R^m_{jk} +R^{HH}{}^l_{\, smi}(H^m_{kj}-H^m_{jk})\} =0 .
\end{equation}

%

This formula can be easily specialized to the notable Finsler connections using the previous formulas for the curvatures $R^{HH}$ and $R^{VH}$. For instance, in the Berwald case it becomes
\begin{equation} \label{sop}
\sum_{\textrm{cyclic}:i,j,k}  \{ \nabla^{HB}_i R^{HH}_{\textrm{Ber}}{}^l_{\, sjk}+ {\mathcal{G}}^l_{\,smi} R^m_{jk} \} =0 ,
\end{equation}
 for the Chern-Rund connection we have
\[
\sum_{\textrm{cyclic}:i,j,k}  \{ \nabla^{HC}_i R^{HH}_{\textrm{ChR}}{}^l_{\, sjk}+ (\frac{\p}{\p y^m} \,\Gamma^{l}_{si} ) R^m_{jk} \}  =0,
\]
 for the Cartan connection we have
\[
\sum_{\textrm{cyclic}:i,j,k}  \{ \nabla^{HC}_i R^{HH}_{\textrm{Car}}{}^l_{\, sjk}+ [\frac{\p}{\p y^m}\, \Gamma^{l}_{si} +C^l_{sn} L^n_{mi}-\nabla^{HC}_i C^l_{sm} ] R^m_{jk} \}  =0.
\]

\subsubsection{VHH-second Bianchi identity}

Let us consider the case in which $\check X$ is vertical and $\check Y$, $\check Z$, are horizontal. Let us set $\check X= \frac{\p}{\p y^i}$, $\check Y=\frac{\delta}{\delta x^j}$ and $\check Z= \frac{\delta}{\delta x^k}$ on Eq.\ (\ref{esx}).
After some algebras we arrive at the second Bianchi identity
\begin{align}
&\nabla^V_i R^{HH}{}^l_{sjk}+\nabla^{H}_k R^{VH} {}^l_{sij}-\nabla^H_j R^{VH} {}^l_{s ik}- R^{HH} {}^l_{skb} \,V^b_{ji}+R^{HH} {}^l_{sjb} \, V^b_{ki} \nonumber \\
&\ - R^{VV} {}^l_{sib} R^b_{jk}+ R^{VH}{}^l_{sbj} (H^b_{ik}-N^b_{ik})-R^{VH} {}^l_{sbk} (H^b_{ij}-N^b_{ij}) \nonumber\\
&\quad +R^{VH}{}^l_{sim} (H^m_{jk}-H^m_{kj})=0.
\end{align}
This expression simplifies considerably for the  Berwald connection
\begin{align}
&\frac{\p}{\p y^i} R^{HH}_{\textrm{ber}}{}^l_{sjk}+\nabla^{H}_k \mathcal{G}^l_{sij}-\nabla^H_j \mathcal{G}^l_{s ik}=0.
\end{align}
 For Chern-Rund it is given by
\begin{align}
&\frac{\p}{\p y^i} R^{HH}_{\textrm{ChR}}{}^l_{sjk}+\nabla^{HC}_k R^{VH}_{\textrm{ChR}} {}^l_{sij}-\nabla^{HC}_j R^{VH}_{\textrm{ChR}} {}^l_{s ik}- R^{VH}_{\textrm{ChR}} {}^l_{sbj}\, L^b_{ik}+R^{VH}_{\textrm{ChR}} {}^l_{sbk}\, L^b_{ij}=0,
\end{align}
while for Cartan it gives
\begin{align}
&\nabla^{VC}_i R^{HH}_{\textrm{Car}}{}^l_{sjk}+\nabla^{HC}_k R^{VH}_{\textrm{Car}} {}^l_{sij}-\nabla^{HC}_j R^{VH}_{\textrm{Car}} {}^l_{s ik}- R^{HH}_{\textrm{Car}} {}^l_{skb} \,C^b_{ji}+R^{HH}_{\textrm{Car}} {}^l_{sjb} \, C^b_{ki} \nonumber \\
& - R^{VV}_{\textrm{Car}} {}^l_{sib} R^b_{jk} - R^{VH}_{\textrm{Car}}{}^l_{sbj} \, L^b_{ik}+R^{VH}_{\textrm{Car}} {}^l_{sbk} \, L^b_{ij}=0.
\end{align}

\subsubsection{VVH-second Bianchi identity}

 Replacing $\check X=\frac{\p}{ \p y^i}$, $\check Y=\frac{\p}{ \p y^j}$, $\check{Z}=\frac{\delta}{\delta x^k}$ on (\ref{esx}) leads to the second Bianchi identity
\begin{align*}
0&=\nabla^V_i R^{VH} {}^l_{sjk}-\nabla^V_j R^{VH} {}^l_{sik}+\nabla^H_k R^{VV} {}^l_{sij}+R^{VH} {}^l_{sbk} (V^b_{ji}-V^b_{ij}) \nonumber \\
& \quad  +R^{VV} {}^l_{sib} (H^b_{jk}-N^b_{jk})-R^{VV} {}^l_{sjb} (H^b_{ik}-N^b_{ik})+R^{VH} {}^l_{sjb} V^b_{ki}-R^{VH}{}^l_{sib} V^b_{kj} .
\end{align*}
For the Berwald connection it reads
\[
\frac{\p}{\p y^i} \,\mathcal{G}^l_{sjk}=\frac{\p}{\p y^j}\, \mathcal{G}^l_{sik},
\]
which is also a consequence of the Schwarz equality of mixed partial derivatives. For the Chern-Rund connection it reads
\[
\frac{\p}{\p y^i}\, R^{VH}_{\textrm{ChR}}{}^l_{sjk}=\frac{\p}{\p y^j} \,R^{VH}_{\textrm{ChR}}{}^l_{sik}.
\]
For the Cartan connection it is given by the more complicated equality
\begin{align*}
0&=\nabla^{VC}_i R^{VH}_{\textrm{Car}} {}^l_{sjk}-\nabla^{VC}_j R^{VH}_{\textrm{Car}} {}^l_{sik}+\nabla^{HC}_k R^{VV}_{\textrm{Car}} {}^l_{sij} \nonumber \\
& \quad  -R^{VV}_{\textrm{Car}} {}^l_{sib} \,L^b_{jk}+R^{VV}_{\textrm{Car}} {}^l_{sjb} \, L^b_{ik}+R^{VH}_{\textrm{Car}} {}^l_{sjb} C^b_{ki}-R^{VH}_{\textrm{Car}} {}^l_{sib} C^b_{kj} .
\end{align*}

\subsubsection{VVV-second Bianchi identity}
Finally, by setting $\check X=\frac{\p}{ \p y^i}$, $\check Y=\frac{\p}{ \p y^j}$, $\check{Z}=\frac{\p}{ \p y^k}$ we arrive at the second Bianchi identity
\begin{align*}
0&=\nabla^V_i R^{VV} {}^l_{sjk}+\nabla^V_k R^{VV} {}^l_{sij}+\nabla^V_j R^{VV} {}^l_{ski}\nonumber \\
& \quad + R^{VV} {}^l_{sbk} (V^b_{ji}-V^b_{ij})+ R^{VV} {}^l_{sbj} (V^b_{ik}-V^b_{ki})+R^{VV} {}^l_{sbi} (V^b_{kj}-V^b_{jk}).
\end{align*}
This is the usual second Bianchi identity for $T_pM\backslash 0$ regarded as a pseudo-Riemannian manifold.
For the Berwald and Chern-Rund connections it gives a trivial equation since all  terms vanish. For the Cartan connection it reads
\begin{align*}
0&=\nabla^V_i R^{VV}_{\textrm{Car}} {}^l_{sjk}+\nabla^V_k R^{VV}_{\textrm{Car}} {}^l_{sij}+\nabla^V_j R^{VV}_{\textrm{Car}} {}^l_{ski},
\end{align*}
where $R^{VV}$ has the special form given by Eq.\ (\ref{kik}).

\section{Conclusion: what is the best Finsler connection?}

We have presented the theory of Finsler connections providing the main identities for the general case and the more interesting  special cases.  This study shows that the Berwald  Finsler connection is more elementary and more closely related to the spray and hence to the geodesics than the other connections. The Cartan connection being fully metric compatible  leads to a curvature with better symmetry properties, however, its  second Bianchi identities are more complicated and not particularly transparent. The Chern-Rund connection like the Berwald connection leads to nice first Bianchi identities, but it is less connected to geodesics than Berwald's.

We are  certainly tempted to  conclude that each Finsler connections has its own advantages and disadvantages. However, it seems more important to explain why it is difficult to select a `best' connection.

The geometry of the tangent space at a point may help to explain the difference between the Berwald and Chern-Rund connections on one side and the Hashiguchi and Cartan connections on the other side. The first two connections, by setting $V_{ij}^k=0$ regard $T_pM$ as a {\em teleparallel} space $\nabla_{\p/\p y^i} \frac{\p}{\p y^j}=0$, that is, they emphasize the fact that on $T_pM$ the directions  of any two vectors  $w_1\in T_{y_1} T_pM\backslash 0$, $w_2\in T_{y_2}T_pM\backslash 0$ can be compared even though $y_1\ne y_2$. Indeed, the coordinates  $\{ y^\mu\}$ are not arbitrary but are instead determined up to a linear transformation.

On the contrary in the Hashiguchi and Cartan approaches one emphasizes  the {\em pseudo-Riemannian} geometry of $(T_pM\backslash 0, g)$. Hence the connection is determined through the  compatibility with the metric. Here one regards $T_pM\backslash 0$ as an arbitrary manifold, thus the information on the special nature of the coordinate system $\{y^\mu\}$  is essentially lost.

This observation might explain why we cannot decide between these two approaches: each of them disregards some feature of the geometry at a point. In the former approach this aspect is the metric, in the latter approach it is the parallelism induced by the coordinate system $\{y^\mu\}$.

However, there is one aspect of the Berwald or Chern-Rund connections which is not particularly appealing. Namely, there is no curvature or torsion which measures the difference between a pseudo-Finsler space and a pseudo-Riemannian space.
For the Cartan or Hashiguchi connections we can use the torsion $T^{VH}_{\textrm{hor}}{}^i_{jk}=C^i_{jk}$.

Probably, the condition $V_{ij}^k=0$ implies that the Chern-Rund and Berwald connections retain too little information on the tangent space. We can try to remedy this situation as follows.
We impose that for some function $f>0$ the base $e_i=f \frac{\p}{\p y^i}$ is $\nabla^V$-parallely transported. This means that, as with Chern-Rund, the vertical connection preserves the teleparallelism induced by the coordinate system $\{y^\mu\}$ (although it does not preserve the `length' of the vectors $\p/\p y^i$). This condition reads
\[
0=\nabla_{e_i} e_j=f\nabla_i^V(f \frac{\p}{\p y^j})=f[\frac{\p f}{\p y^i}\,\delta_j^k+fV_{ji}^k] \frac{\p}{\p y^k}
\]
thus $V_{ji}^k=-\frac{\p \ln  f}{\p y^i} \, \delta^k_j$.
The  gained degree of freedom is used to solve $\nabla^V\mu=0$, that is, it is used to preserve the volume form. The study of Sect.\ \ref{vol} shows that this condition reads $I_s=V^r_{rs}$ thus we arrive at
\begin{equation}
V^k_{ji}=\frac{1}{n} \,I_i\, \delta^k_j.
\end{equation}
where $n$ is the dimension of $M$.
This Finsler connection is regular, that is,
\[
\det(\delta^k_i+\frac{1}{n}\,I_i \,y^k)\ne 0.
\]
Indeed, suppose that $v^i$ belongs to the kernel of $\delta^k_i+\frac{1}{n}I_i y^k$, then $v^k=-\frac{1}{n} (I_i v^i) y^k$, that is $v^k=a y^k$ for some $a$, but since $I_k y^k=0$ we have $a=0$, so the Finsler connection is regular.

Therefore, we are  led to the following Finsler connections
\[
(\mathcal{G}^i_j, \mathcal{G}^i_{jk}, \frac{1}{n}\, I_i\, \delta^k_j), \qquad (\mathcal{G}^i_j, \Gamma^i_{jk}, \frac{1}{n}\, I_i\, \delta^k_j).
\]
They improve Berwald and Chern-Rund, because
\[
T_{\textrm{hor}}^{VH} {}^k_{ij}=\frac{1}{n}\, I_i \delta^k_j, \qquad T_{\textrm{ver}}^{VV} {}^k_{ij}=\frac{1}{n} \,(I_i \delta^k_j-I_j \delta^k_i).
\]
In the positive definite case by Deicke's theorem
 $T_{\textrm{ver}}^{VV}=0$ if and only if the Finsler space is Riemannian. Thus the torsion  $T_{\textrm{ver}}^{VV}$ provides a measure of the difference between
Finsler and Riemannian spaces, thus solving the mentioned difficulty with the Berwald and Chern-Rund connections. Furthermore, contrary to what happens with the Cartan connection, the torsion that measures this difference, namely $T_{\textrm{ver}}^{VV}$, lives entirely in the vertical geometry, contrary to the tensor $T^{VH}_{\textrm{hor}}$ used for the same purpose in the Cartan or Hashiguchi approach, which instead, quite strangely, involves the horizontal directions. In other words, in the new connection we can appreciate the difference between Finsler and Riemannian spaces using vertical objects, while for the Cartan and Hashiguchi connections one has to probe the geometry of the space horizontally just to make `vertical'  conclusions.

By a similar line of argument the choice $H^i_{jk}=\Gamma^i_{jk}$ should be preferred over Berwald's in the definition of the horizontal covariant derivative. Indeed, using the Berwald or Hashiguchi connections there is not way to characterize the Landsberg spaces using their curvatures or torsions. They seem to contain too little information on the horizontal metric properties of the space. Instead, the choice $H^i_{jk}=\Gamma^i_{jk}$ allows one to recover the Berwald horizontal covariant derivative whenever the torsion $T^{VH}_{ver}$ vanishes. The other way around is precluded.

Thus, while it could be tempting to conclude that all the Finsler connections are similarly important and that each of them has its own merits,  the above geometrical considerations seem to point to a different direction. They  select the Finsler connections
\[
(\mathcal{G}^i_j, \Gamma^i_{jk}, C^i_{jk}), \qquad (\mathcal{G}^i_j, \Gamma^i_{jk}, \frac{1}{n}\, I_i\, \delta^k_j),
\]
as
those retaining most of the useful geometric information on the pseudo-Finsler space. It is possible that further analysis will reveal further improvements on the known Finsler connections. Ultimately, authors working in Finsler geometry  might indeed reach some consensus on a `best',  `right' or `most appropriate' Finsler connection.

\section*{Acknowledgments}
I thank Daniel Canarutto, Marco Modugno and Carlos Tejero Prieto for some very useful conversations on the theory of connections and Finsler geometry. I thank the Department of Mathematics of the University of Salamanca for kind hospitality. This work has been partially supported by GNFM of INDAM.
%


\begin{center}

\begin{tabular}{@{} *5l @{}}    \toprule
\emph{Math.\ objects} & \emph{Meaning} &  \\\midrule
$G^i$, $G^i_j$, $G^i_{jk}$, $G^i_{jkl}$    & General spray and its vertical derivatives \\
$\mathcal{G}^i$, $\mathcal{G}^i_j$, $\mathcal{G}^i_{jk}$, $\mathcal{G}^i_{jkl}$    & As above but $\mathcal{G}^i$ comes from the Finsler Lagrangian   \\
  $\mathcal{G}^i_{jkl}$    & Berwald's curvature\\
 $N^i_j$ & General non-linear connection \\
 $\mathcal{N}(X)$ & horizontal lift of $X$\\
 $\tilde X, X^V$ & vertical lift of $X$\\
  $\mathcal{G}^i_j$ & Berwald's non-linear connection \\
  $\mathcal{L}$    & Finsler Lagrangian  ($F^2/2$) \\
  $g_{ij}$    & Finsler metric \\
  $L=y^i{\p}/{\p y^i}$ & Liouville vector field \\
  $(N^i_j, H^i_{jk}, V^i_{jk})$ & General Finsler connection\\
  $(N^i_j, N^i_{jk}, 0)$ &  Finsler connection induced by a non-linear connection\\
    $(\mathcal{G}^i_j, \mathcal{G}^i_{jk}, 0)$ & Berwald's Finsler connection\\
  $(\mathcal{G}^i_j, \Gamma^i_{jk}, C^i_{jk})$ & Cartan's  Finsler connection\\
  $(\mathcal{G}^i_j, \Gamma^i_{jk}, 0)$ & Chern-Rund's Finsler connection\\
  $\nabla^{H}, \nabla^V$    & General horizontal and vertical covariant derivatives\\
   $\nabla^{HB}$    & Berwald's horizontal covariant derivative\\
      $\nabla^{HC}$    & Cartan's (Chern-Rund's) horizontal covariant derivative \\
      $\nabla^{VB}$    & Berwald's (Chern-Rund's) vertical covariant derivative\\
      $\nabla^{VC}$    & Cartan's vertical covariant derivative\\
   $C_{ijk}$    & Cartan torsion \\
    $I_i$    & Mean Cartan torsion \\
     $L^i_{jk}$    & Landsberg tensor \\
           $J_i$    & Mean Landsberg tensor \\
           $\mu$    & Volume form \\
              $R^m_{kl}$    & Curvature of the non-linear connection \\
              $\tau^m_{kl}$    & Torsion of the non-linear connection \\
                 $R^\nabla$    & Curvature of the (linear) Finsler connection \\
                    $R^{HH}$, $R^{VH}$, $R^{VV}$    & Projections of the curvature $R^\nabla$  \\
                    $\sigma$, $\sigma_h$, $\sigma_v$ & Soldering form, horizontal and vertical versions \\
                    $T^{(\nabla,\sigma)}$ & Torsion of the Finsler connection \\
                    $T^{HH}$, $T^{VH}$, $T^{VV}$    & Projections of the Torsion, in versions hor/ver  \\
           \bottomrule
 \hline
\end{tabular}

\end{center}

\end{document}